\DeclareMathOperator{\E}{E}
\DeclareMathOperator{\Var}{Var}
\newtheorem{theorem}{Theorem}
\newtheorem{lemma}[theorem]{Lemma}
\newtheorem{corollary}[theorem]{Corollary}
\newtheorem{definition}{Definition}
\theoremstyle{definition}
\begin{document}

\noindent
\begin{minipage}[H]{0.65\textwidth}

\large{\textbf{Identifying Correlation in Stream of Samples}}\\
\vspace{0.2cm}

\noindent \normalsize{Gu Zhenhao\footnote{School of Computing, National University of Singapore. Email: \url{guzh@nus.edu.sg}}, Zhang Hao }
\end{minipage}
\hfill
\begin{minipage}[H]{0.25\textwidth}

\includegraphics[width=\textwidth]{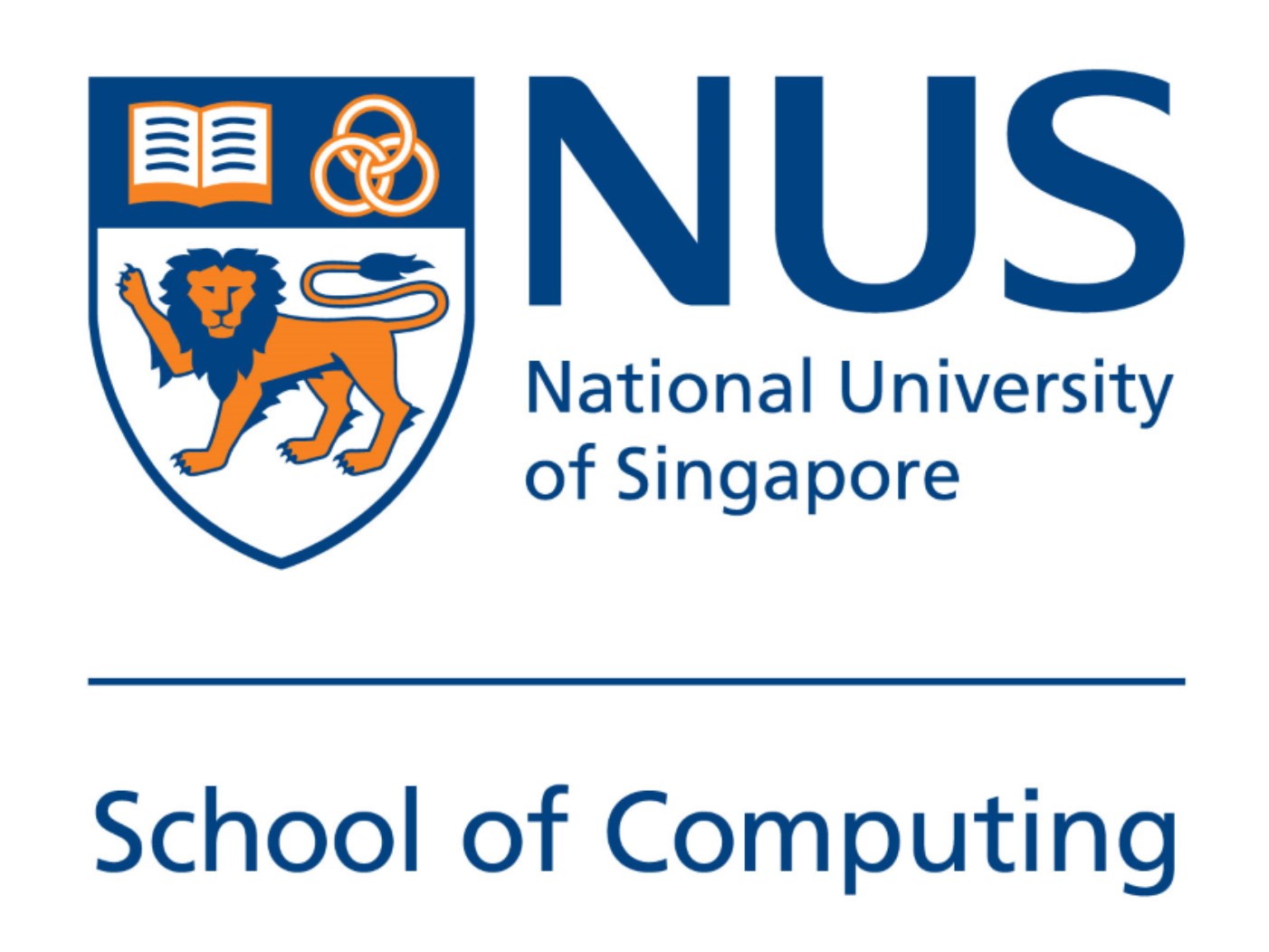}

\end{minipage}

\vspace{0.5cm}

\begin{abstract}

    \noindent Identifying independence between two random variables or correlated given their samples has been a fundamental problem in Statistics. However, how to do so in a space-efficient way if the number of states is large is not quite well-studied.
    
    \noindent We propose a new, simple counter matrix algorithm, which utilize hash functions and a compressed counter matrix to give an unbiased estimate of the $\ell_2$ independence metric. With $\mathcal{O}(\epsilon^{-4}\log\delta^{-1})$ (very loose bound) space, we can guarantee $1\pm\epsilon$ multiplicative error with probability at least $1-\delta$. We also provide a comparison of our algorithm with the state-of-the-art sketching of sketches algorithm and show that our algorithm is effective, and actually faster and at least 2 times more space-efficient.
\end{abstract}

\section{Introduction}
In this paper, we try to check whether two (or more) random variables are independent based on their samples in a data stream. More specifically, given a stream of pairs $(i,j)$, where $i,j\in[n]^2$, and $(i,j)$ follows a joint distribution $(X,Y)$, we want to check if $X$ and $Y$ are independent, i.e. $\Pr[X=i,Y=j]=\Pr[X=i]\Pr[Y=j]$ for all $i,j\in[n]^2$.

This is a fundamental problem with a lot of applications in network monitoring, sensor networks, economic/stock market analysis, and communication applications, etc.

For simplicity, we define matrix $r$ and $s$ where $r_{ij}=\Pr[X=i,Y=j]$ and $s_{ij}=\Pr[X=i]\Pr[Y=j]$. Now, to measure how close $X$ and $Y$ are to independence, we have three metrics\cite{inproceedings}:
\begin{enumerate}
    \item \textbf{$\ell_1$ difference}: The sum of absolute difference of the joint distribution and the product distribution, similar to the $\ell_1$-norm of vectors,
    $$|\Delta|=|r-s|=\sum_{i,j} |r_{ij}-s_{ij}|$$
    \item \textbf{$\ell_2$ difference}: The sum of squared difference of the joint distribution and the product distribution, taken square root, similar to the $\ell_2$-norm of vectors,
    $$\lVert \Delta\rVert=\lVert r-s\rVert=\sqrt{\sum_{i,j} (r_{ij}-s_{ij})^2}$$
    \item \textbf{Mutual information}: defined by $I(X;Y)=H(X)-H(X\mid Y)$, where $H(X)=-\sum_{i=1}^n \Pr[X=1]\log \Pr[X=i]$ is the \textit{entropy} of distribution $X$ and
    $$H(X\mid Y)=\sum_{i,j}\Pr[X=i,Y=j]\log \frac{\Pr[Y=j]}{\Pr[X=i,Y=j]}$$
    is the \textit{conditional entropy}.
\end{enumerate}
All metrics should be zero if $X$ and $Y$ are independent, and further away from 0 if $X$ and $Y$ are further away from independence (higher correlation). Our goal then reduces to give an estimation of the value of these metrics, and check if it is small enough to conclude independence between $X$ and $Y$.

Our work in this paper involves the following:
\begin{enumerate}
    \item Implement the state-of-the-art sketching of sketches algorithm proposed by Indyk and McGregor \cite{inproceedings}.
    \item Propose a novel counter matrix algorithm that store the frequency of each pair $(i,j)$ in a compressed counter matrix and calculate an unbiased estimate for the $\ell_2$ difference, and
    \item implement the counter matrix algorithm, and compare the result with the sketching of sketches algorithm.
\end{enumerate}

To avoid confusion, we summarize the notation used in the report as follows:
\begin{table}[H]
    \centering
    \begin{tabular}{ll}
    \toprule
     Notation    & Meaning \\
    \midrule
    $X,Y$ & Two random variables whose correlation we are trying to identify\\
    $n$ & Range of $X$ and $Y$ should be $[1,n]$.\\
     $p_i$ & Probability of $X$ taking value $i$, $\Pr[X=i]$\\
     $q_j$ & Probability of $Y$ taking value $j$, $\Pr[Y=j]$\\
     $s_{i,j}$  & Probability of $(i,j)$ appearing in the stream, $\Pr[X=i,Y=j]$\\
     $r_{i,j}$ & Product of marginal probabilities, $\Pr[X=i]\Pr[Y=j]$\\
     $\Delta_{i,j}$ & Difference between joint probability and product probability, $s_{i,j}-r_{i,j}$\\
     $A$ & Size of counter matrix\\
     $B$ & Number of counter matrices to run\\
     $\hat{p},\hat{q},\hat{s},\hat{\Delta}$& Estimator of $p,q,s,\Delta$ calculated by the full $n\times n$ counter matrix\\
     $\Tilde{p},\Tilde{q},\Tilde{s},\Tilde{\Delta}$ &Estimator of $\hat{p},\hat{q},\hat{s},\hat{\Delta}$ calculated by the truncated $A\times A$ counter matrix\\
     $n_{i,j}$ & The number of time each pair $(i,j)$ appears in the stream\\
     $C_{i,j}$ & The value in the counter matrix on the $i^{\text{th}}$ row and $j^{\text{th}}$ column\\
     $\sum C_{i,\cdot},\sum C_{\cdot,j}$ & The sum of values in the $i^{\text{th}}$ row, $j^{\text{th}}$ column in $C$, respectively\\
     $N$ & Length of the stream\\
    \bottomrule
    \end{tabular}
    \caption{Notations}
    \label{tab:my_label}
\end{table}
\section{Related Work}

Indyk and McGregor were the first to try to identify the correlation between sample in data stream using sketching algorithms \cite{inproceedings}. They propose an estimate for the above matrices,

\subsection{$\ell_2$ Difference}
The paper mainly utilizes the result that the weighted sum of 4-wise independent vectors $x,y\in\{-1,1\}^{n}$ and any $n\times 2$ matrix $v$ could be used to estimate the norm of $v$ \cite{ALON1999137}. More specifically,
    
    \begin{lemma}[Unbiased estimator]\label{lemma:l2}
    Consider $x,y\in\{-1,1\}^{n}$ where each vector is 4-wise independent. Let $v\in\mathds{R}^{n^2}$. Define $\Upsilon=(\sum_{i,j\in[n]}x_{i}y_{j}v_{i,j})^2$. Then $\E[\Upsilon]=\sum_{i,j\in[n]}v_{i,j}^2=\lVert v\rVert^2$ and $\Var[\Upsilon]\leq 3(\E[\Upsilon])^2$.
    \end{lemma}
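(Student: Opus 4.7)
The expectation identity is immediate from pairwise independence. Expanding the square yields
$$\Upsilon = \sum_{i,j,k,\ell} x_i x_k \, y_j y_\ell \, v_{i,j}\, v_{k,\ell},$$
and since $x$ and $y$ are independent of each other and each vector is pairwise independent with $\pm 1$ values, $\E[x_i x_k]\E[y_j y_\ell]$ equals $1$ when $i=k$ and $j=\ell$ and vanishes otherwise. Only the diagonal terms survive, giving $\E[\Upsilon] = \sum_{i,j} v_{i,j}^2 = \lVert v\rVert^2$.

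For the variance bound I would compute $\E[\Upsilon^2] = \E[S^4]$, where $S := \sum_{i,j} x_i y_j v_{i,j}$, by conditioning on $y$. Setting $w_i := \sum_j y_j v_{i,j}$ gives $S = \sum_i x_i w_i$, and the standard AMS fourth-moment identity for 4-wise independent $\pm 1$ variables gives
$$\E_x\!\left[S^4 \mid y\right] \;=\; 3\Big(\sum_i w_i^2\Big)^{\!2} - 2\sum_i w_i^4 \;\le\; 3\Big(\sum_i w_i^2\Big)^{\!2}.$$
Taking expectation over $y$, the inner square expands into products of four $y$-factors which, by 4-wise independence of $y$, contribute only through pairing terms. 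Organizing these contributions yields a diagonal piece equal to $\lVert v\rVert^4$ together with a cross piece of the form $\lVert V^{\top} V\rVert_F^2$, where $V=(v_{i,j})$ is viewed as an $n\times n$ matrix, plus negative corrections. The bound $\lVert V^{\top} V\rVert_F^2 \le \lVert V\rVert_F^4 = \lVert v\rVert^4$ (a consequence of $\sum_i \sigma_i^4 \le (\sum_i \sigma_i^2)^2$ on singular values) then gives $\E[\Upsilon^2] \le C\,\lVert v\rVert^4$ for a constant $C$, from which $\Var[\Upsilon] \le (C-1)(\E[\Upsilon])^2$ follows.

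The main obstacle is pinning down the constant so that it reaches $3$. A naive accounting of the pairing patterns overcounts contributions, so one must retain the negative correction terms (both the $-2\sum_i w_i^4$ above and an analogous correction in the $y$-average) and combine them with row/column Cauchy--Schwarz bounds on $V$ to tighten the constant. Aside from this bookkeeping, the argument is routine: linearity of expectation, independence of $x$ from $y$, the fourth-moment identity for 4-wise independent $\pm 1$ variables applied separately to $x$ and $y$, and enumerating the finitely many ways in which quadruples of indices can pair up.
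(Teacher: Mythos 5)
Your expectation computation is correct, and note that the paper offers no proof to compare against: Lemma~\ref{lemma:l2} is quoted from Alon--Matias--Szegedy and Indyk--McGregor \cite{inproceedings} without proof. Your variance argument is also sound as far as it goes. Conditioning on $y$, applying the fourth-moment identity to $x$, and then averaging over $y$ gives
\begin{equation*}
\E[\Upsilon^2]\;\leq\;3\,\E_y\Bigl[\bigl(\sum_i w_i^2\bigr)^2\Bigr]\;\leq\;3\bigl(\lVert v\rVert^4+2\lVert V^{\top}V\rVert_F^2\bigr)\;\leq\;9\lVert v\rVert^4,
\end{equation*}
hence $\Var[\Upsilon]\leq 8(\E[\Upsilon])^2$.

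The genuine problem is the step you defer: no bookkeeping of the negative correction terms will bring the constant down to $3$, because the stated bound $\Var[\Upsilon]\leq 3(\E[\Upsilon])^2$ is false. Take $v_{i,j}=1$ for all $i,j$, so $S=(\sum_i x_i)(\sum_j y_j)$ and $\lVert v\rVert^2=n^2$. Then $\E[\Upsilon]=n^2$ while $\E[\Upsilon^2]=\E[(\sum_i x_i)^4]\cdot\E[(\sum_j y_j)^4]=(3n^2-2n)^2$, so $\Var[\Upsilon]=8n^4-12n^3+4n^2$, which exceeds $3n^4$ for every $n\geq 3$ and approaches $8\lVert v\rVert^4$. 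The reason the one-dimensional AMS constant does not transfer is that the family $\{x_iy_j\}$ is only $3$-wise independent even when $x$ and $y$ are each $4$-wise independent (for instance $x_1y_1\cdot x_1y_2\cdot x_2y_1\cdot x_2y_2=1$ deterministically), so the pairing count in $\E[S^4]$ is $3\times 3$ rather than $3$. Your constant $8$ is in fact the correct and essentially tight one; this is precisely the known error in the original Indyk--McGregor lemma, whose corrected form reads $\E[\Upsilon^2]\leq 3^k(\E[\Upsilon])^2$ for $k$-fold product domains. So the right move is not to chase the constant $3$ but to correct the lemma to $\Var[\Upsilon]\leq 8(\E[\Upsilon])^2$, which only changes constants in the downstream Chebyshev and median arguments.
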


    Using this lemma, we can randomly generate vectors of $x,y\in\{-1,1\}^{n}$, and let $v$ in the lemma be $r-s$, and calculate $\Upsilon$ base on the data in the stream. $\Upsilon$ is then an unbiased estimator for the $\ell_2$ difference, and the variance is also bounded, meaning that we can use Chebychev's inequality to limit the error.

 \begin{algorithm}[H]
 \vspace{0.2cm}
	\SetKwInOut{Input}{Input}
	\SetKwInOut{Output}{Output}
	\caption{Approximating $\lVert s-r\rVert^2$} \label{Alg:l2}
	$x,y\gets$ random vector of length $n$ containing $1$ or $-1$;\\
	$t_1,t_2,t_3\gets 0$;\\
	\For{each pair $(i,j)$ in stream}{
	    $t_1\gets t_1+x_iy_j$; \tcp{sums to $\sum_{i,j}x_iy_j(N\cdot s_{i,j})$}
	    $t_2\gets t_2+x_i$;  \tcp{sums to $\sum_{i}x_i(N\cdot p_{i})$}
	    $t_3\gets t_3+y_j$; \tcp{sums to $\sum_{j}y_j(N\cdot q_{j})$}
	}
	$t\gets t_1/N-t_2t_3/N^2$; \tcp{$=\sum_{i,j}x_iy_j(s_{i,j}-p_iq_j)=\sum_{i,j}x_iy_j(s_{i,j}-r_{i,j})$}
	$\Upsilon\gets t^2$; \\
	\Return $\Upsilon$;
\vspace{0.2cm}	
\end{algorithm}

The algorithm is very neat, as we only need to compute three numbers for each run. However, when it comes to implementation, the drawback of the algorithm appears.
\begin{enumerate}
    \item It is actually memory-inefficient since for each run, we need to generate random 4-wise independent vectors $x,y$ and store them until we read the whole stream. Therefore storing those vectors will actually cost $\mathcal{O}(\epsilon^{-2}n\log \delta^{-1})$ space.
    \item We can solve the above problem by choosing a large $p$ and using a polynomial of degree 3,
    $$h(x)=((h_3 x^3+h_2 x^2+h_1 x+h_0)\mod p \mod 2)\times 2-1$$
    where the coefficients $h_0,h_1,h_2,h_3<p$ are randomly chosen, to map $x$ to either $-1$ or $1$, creating a 4-wise independent vector \cite{WEGMAN1981265}. 
    
    T However, calculating this hash function is computationally expensive, making the algorithm slow. And although we get rid of the $n$ factor in space complexity, the constant factor is at least 8, so again a lot of space is used.
\end{enumerate}
We were able to verify the correctness of the algorithm, as shown in our results section.
\subsection{$\ell_1$ Difference}
The algorithm of estimating $\ell_1$ difference will take a linear space since we will need to generate and store 
\begin{itemize}
    \item A vector $x$ of length $n$ where each element follows a Cauchy distribution, and
    \item a vector $y$ of length $n$ where each element follows a $T$-truncated-Cauchy distribution \cite{indyk2006stable}
\end{itemize}

\begin{definition}[T-truncated Cauchy]\label{definition:l1}
Let $T\ge 0$, $X \sim \text{Cauchy}$,
$$
  Y = \left\{
    \begin{array}{ll}
       -T &  \text{if } X \le -T \\
      X & \text{if }  -T< X < T \\
      T & \text{if } X \ge T 
    \end{array}\right.
$$
We say $Y \sim T$-truncated-Cauchy.
\end{definition}

\begin{lemma}[Property of T-truncated Cauchy] \label{lemma:l1_1}
Let $ T = 100n $, and $ y \in \mathds{R}^n  \sim T$-Truncated-Cauchy,
\\

$$\Pr \left[ 1/100 \le \frac{\sum_j | y \cdot v^j |}{\sum_j |v^j|} \le 20 \ln n \right] \le 9/10$$
\end{lemma}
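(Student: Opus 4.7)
My plan is to prove the bound $\Pr[E] \le 9/10$ for the event $E := \bigl\{1/100 \le R \le 20\ln n\bigr\}$, where $R := \frac{\sum_j |y\cdot v^j|}{\sum_j |v^j|}$, by showing that the complementary event has probability at least $1/10$. That is, I will establish $\Pr[R < 1/100] + \Pr[R > 20\ln n] \ge 1/10$ by analyzing the two tails separately and taking advantage of the heavy-tailed nature of the Cauchy distribution that survives truncation at $T = 100n$.

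First I would calibrate the scale of $R$. Integrating the truncated Cauchy density gives $\E[|y_i|] = \frac{1}{\pi}\ln(1+T^2) + \frac{2T}{\pi}\arctan(1/T) = \Theta(\ln n)$. The triangle inequality and linearity then yield $\E[R] \le \E[|y_i|] = \Theta(\ln n)$, while the median of each $|y\cdot v^j|$ (whose untruncated version is Cauchy with scale $|v^j|$) is $\Theta(|v^j|)$, so the median of $R$ is only $\Theta(1)$. This large gap between median and mean, the signature of heavy tails, is the resource I will exploit.

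For the upper tail $\{R > 20\ln n\}$, I would use that the untruncated $y\cdot v^j$ is Cauchy with scale $|v^j|$, giving $\Pr\bigl[|y\cdot v^j| > t |v^j|\bigr] \ge c/t$ for large $t$; truncation at $T=100n$ only discards mass of total probability $n\cdot\Theta(1/T) = \Theta(1/100)$, so the same lower bound persists in the regime $t = O(\ln n)$. A union-type argument across $j$ then gives $\Pr[R > 20\ln n] \ge c_1$ for a positive constant $c_1$. For the lower tail $\{R < 1/100\}$, I would use anti-concentration: since the Cauchy density at zero is bounded, $\Pr\bigl[|y\cdot v^j| < \epsilon |v^j|\bigr] = \Theta(\epsilon)$, which yields a non-trivial probability $c_2$ that all terms shrink together.

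The hard part will be dependence: the terms $\{y\cdot v^j\}_j$ share the single draw of $y$, so standard independence-based concentration is unavailable. My intended workaround is to isolate a single coordinate $y_{i^*}$ whose magnitude alone drives $R$ out of the interval $[1/100, 20\ln n]$ — either $|y_{i^*}|$ is anomalously large (pushing $R$ above $20\ln n$) or the $y_i$'s conspire to make every $y\cdot v^j$ small (pushing $R$ below $1/100$). Balancing the constants arising from the truncation loss, the Cauchy upper tail, and the anti-concentration bound so that $c_1 + c_2 \ge 1/10$ holds is the central bookkeeping task.
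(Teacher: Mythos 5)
You've built your whole plan around the inequality exactly as printed, but the printed direction is a typo: the lemma the paper is quoting from Indyk and McGregor (the paper itself gives no proof of this statement, only cites it) is $\Pr\left[1/100 \le R \le 20\ln n\right] \ge 9/10$, where $R=\sum_j|y\cdot v^j|/\sum_j|v^j|$. That is the only direction that is useful — the algorithm needs the estimator to land inside the multiplicative window with probability bounded away from $1/2$ so that the median over $O(\log\delta^{-1})$ repetitions concentrates; an upper bound of $9/10$ on that probability would say nothing. Worse, the literal statement you set out to prove, $\Pr[R<1/100]+\Pr[R>20\ln n]\ge 1/10$, is false: take all $v^j$ parallel (or a single nonzero $v^1$), so that $R$ is the absolute value of a standard Cauchy truncated at $T=100n$; then $\Pr[R<1/100]=\frac{2}{\pi}\arctan(1/100)\approx 0.0064$ and $\Pr[R>20\ln n]=\frac{2}{\pi}\arctan\bigl(1/(20\ln n)\bigr)\approx\frac{1}{10\pi\ln n}$, whose sum is far below $1/10$. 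The two pillars of your plan also fail individually: $\Pr[R>20\ln n]$ is $O(1/\ln n)$, not bounded below by a constant $c_1$ (one anomalously large $|y_{i^*}|$ only pushes $R$ past $20\ln n$ if the corresponding $v^j$'s carry a constant fraction of $\sum_j|v^j|$), and the probability that all $n$ dependent terms $y\cdot v^j$ are simultaneously small is not a constant $c_2$ — for $v^j=e_j$ it is $\bigl(\frac{2}{\pi}\arctan(1/100)\bigr)^n$.

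The standard proof runs in the opposite direction and uses exactly the ingredients the paper lists alongside this lemma. For the upper tail, the truncated Cauchy has finite mean $\E|y_i|=\frac{1}{\pi}\ln(1+T^2)+O(1)=O(\ln n)$ — this is what Lemma \ref{lemma:l1_2} supplies and is the entire reason for truncating — so $\E\bigl[\sum_j|y\cdot v^j|\bigr]\le O(\ln n)\sum_j|v^j|$ and Markov's inequality makes $\Pr[R>20\ln n]$ a small constant. For the lower tail, $1$-stability gives $y\cdot v^j\sim|v^j|\cdot C$ with $C$ standard Cauchy up to the $O(n/T)=O(1/100)$ mass lost to truncation, anti-concentration at the origin gives $\Pr\bigl[|y\cdot v^j|<|v^j|/50\bigr]\le\frac{2}{\pi}\arctan(1/50)$, and Markov applied to the $|v^j|$-weighted indicators of this bad event shows that with probability at least about $0.97$ at least half of the mass $\sum_j|v^j|$ has $|y\cdot v^j|\ge|v^j|/50$, forcing $R\ge 1/100$. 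Summing the two failure probabilities gives the $9/10$ success bound. Your instinct to exploit the mean–median gap of the heavy tail is the right intuition for why the window must be as wide as $[1/100,\,20\ln n]$, but it should be deployed via Markov from above, not via a lower bound on the escape probability.
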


\begin{lemma}[Property of $T$-truncated Cauchy \cite{indyk2006stable}] \label{lemma:l1_2}
For any $T>0$, and $ y \in \mathds{R}^n  \sim T$-Truncated-Cauchy

$$\E \left[ \sum_j {y_j}^T \right] \le k \left[ \frac{\ln{T^2 +1}}{\pi} + b \right]$$
\end{lemma}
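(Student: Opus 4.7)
The plan is to reduce the claim to a one-dimensional calculation and then invoke linearity of expectation, since the coordinates $y_j$ of a $T$-truncated-Cauchy vector are i.i.d.\ copies of the truncated variable $Y$ of Definition~\ref{definition:l1}. I interpret ${y_j}^T$ in the statement as $|y_j|$ (the truncation-level superscript), $k$ as the dimension of $y$, and $b$ as an absolute constant independent of $T$; under this reading the lemma is equivalent to showing $\E[|Y|] \le \frac{\ln(T^2+1)}{\pi}+b$ for a single $T$-truncated-Cauchy variable $Y$.

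First I would split $\E[|Y|]$ into a central contribution (over $|X|<T$, where $Y=X$) and a boundary contribution (over $|X|\ge T$, where $Y=\pm T$). Using the Cauchy density $f(x)=\frac{1}{\pi(1+x^2)}$, the central piece is
$$\int_{-T}^{T}\frac{|x|}{\pi(1+x^2)}\,dx \;=\; \frac{2}{\pi}\int_0^T \frac{x}{1+x^2}\,dx \;=\; \frac{\ln(1+T^2)}{\pi},$$
which produces the advertised logarithmic term exactly. The boundary piece is $T\cdot\Pr[|X|\ge T] = T\bigl(1-\tfrac{2}{\pi}\arctan T\bigr)$, and the key observation is that as $T\to\infty$ one has $\arctan T = \tfrac{\pi}{2}-\tfrac{1}{T}+O(T^{-3})$, so the boundary piece tends to $\tfrac{2}{\pi}$ and is uniformly bounded by some absolute constant $b$ for all $T\ge 0$.

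Combining the two pieces gives $\E[|Y|]\le \frac{\ln(T^2+1)}{\pi}+b$ for a single coordinate. Linearity of expectation across the $k$ independent entries of $y$ then yields
$$\E\!\left[\sum_{j=1}^k |y_j|\right] \;\le\; k\!\left[\frac{\ln(T^2+1)}{\pi}+b\right],$$
which is the stated bound.

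The only nontrivial step is controlling the boundary contribution uniformly in $T$; everything else is a direct antiderivative. I would handle this by showing the function $g(T):=T(1-\tfrac{2}{\pi}\arctan T)$ is continuous, satisfies $g(0)=0$, and has the finite limit $\tfrac{2}{\pi}$ at infinity, so $b:=\sup_{T\ge 0}g(T)$ is a valid finite absolute constant. No other subtleties arise, since once the one-dimensional bound is in hand the $k$-fold statement is immediate.
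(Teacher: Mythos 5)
The paper does not actually prove this lemma anywhere --- it is imported verbatim (and somewhat garbled) from the cited Indyk reference and used as a black box in the $\ell_1$ discussion --- so there is no in-paper argument to compare yours against. Your proposal is a correct, self-contained derivation under the most sensible reading of the statement: interpreting ${y_j}^T$ as the truncated magnitude, $k$ as the number of coordinates, and $b$ as an absolute constant is consistent with how the source uses the bound. The two computations are right: the central piece integrates to $\frac{1}{\pi}\ln(1+T^2)$ exactly, and for the boundary piece you can in fact be sharper than a soft compactness argument, since
$$T\Bigl(1-\tfrac{2}{\pi}\arctan T\Bigr) \;=\; \tfrac{2}{\pi}\,T\arctan(1/T)\;\le\;\tfrac{2}{\pi},$$
using $\arctan u\le u$, so $b=2/\pi$ works explicitly and no limit analysis is needed. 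The only caveat worth flagging is that your proof is of your \emph{interpretation} of the statement; the statement as printed ($\E[\sum_j y_j^T]$ with undefined $k$ and $b$, and $\ln{T^2+1}$ ambiguously bracketed) does not literally say what you proved, and a reader should be told that the lemma is being restated in corrected form before your argument is applied. With that understood, your proof is complete and arguably an improvement on the paper, which leaves the result entirely to the citation.
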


for each $\mathcal{O}(\log\delta^{-1})$ runs of experiment. This time, we cannot use a hash function anymore, so the $n$ factor would be inevitable for a single-pass algorithm, making it very memory-inefficient. The pseudo-code for the algorithm is as follows:

\begin{algorithm}[H]
    \label{algo_2}
	\SetKwInOut{Input}{Input}
	\SetKwInOut{Output}{Output}
	\caption{Approximating $|s-r|$} \label{Alg:l1}
	\BlankLine
	\For{$k$ in  $1...\log(1/\delta)$ in parallel}{
	    ${t_1}^1...{t_1}^s, {t_2}^1,...{t_2}^s, t_3  \gets 0$;\\
	    $T \gets 100n$;\\
	    $x_1 ... x_s \gets $ For each $ x_i^j \sim $ Cauchy and independent\\ 
	    $y_i \gets y_i \sim  T$-Truncated-Cauchy and independent \\
	    \For{each stream element (i,j)}{
	        ${t_1}^r \gets {t_1}^r + {x_i}^r \cdot y_j$; \tcp{sums to $N\sum_{i,j \in [n]} {x_i}^r y_j {r_i}_j$}
	        ${t_2}^r \gets {t_2}^r + {x_i}^r$; \tcp{sums to $N\sum_{i \in [n]} {x_i}^r p_i$}
	        ${t_3} \gets t_3 + y_j$; \tcp{sums to $N\sum_{i \in [n]} y_j q_j$}
	    }
	    $\Upsilon_k = \texttt{median}(|{t_1}^1/N - {t_2}^1\cdot t_3/N^2|, ....,|{t_1}^n/N -{t_2}^n\cdot t_3/N^2|)$;
	}
	\Return $texttt{median}(\Upsilon_1,....\Upsilon_{\log{1/\delta}})$;
\end{algorithm}

The paper proved that the algorithm is theoretically correct using definition \ref{definition:l1}, lemma \ref{lemma:l1_1}, lemma \ref{lemma:l1_2}, Markov's inequality and Chernoff Bound. This algorithm is extremely space efficient as it only takes $\mathcal{O}(\log1/\delta)$ space, however the error produced by the algorithm is high ($\mathcal{O}(\log n)$ multiplicative error) and it's dependent on $n$. Also, this algorithm assumed random oracle and infinite precision.

We tried to verify this algorithm with a few experiments, however the result is way too far from the theoretical correctness. We suspect the issue is caused by many "ideal" assumptions made in the proof and analysis, therefore the algorithm performs poorly in real life. In addition, even the expected error from its $(\mathcal{O}(\log n),\delta)$-approximation is very large, we do not see any value of further evaluating the algorithm.

\section{Preliminaries}
Given two discrete random variables $X$ and $Y$, we say that they are independent if and only if for all $i, j$, $$\underbrace{\Pr[X=i,Y=j]}_{=s_{i,j}}=\underbrace{\Pr[X=i]}_{=p_i}\underbrace{\Pr[Y=j]}_{=q_j}$$
or $s_{i,j}=p_iq_j$. This means that ideally, $\Delta_{i,j}=s_{i,j}-p_iq_j=0$ for all $i,j$ and the $\ell_1$ difference $|\Delta|=0$, $\ell_2$ difference $\lVert \Delta\rVert=0$. If we have enough space, we can store an $n\times n$ counter matrix,
\begin{table}[H]
\centering
\begin{tabular}{rccccccc
>{\columncolor[HTML]{EFEFEF}}l }
                               & 1                        & 2                        & $\dots$                                       & $j$                                                & $\dots$                  & $n-1$                    & $n$                      & $\sum$                      \\ \cline{2-8}
\multicolumn{1}{r|}{1}         & \multicolumn{1}{c|}{}    & \multicolumn{1}{c|}{}    & \multicolumn{1}{c|}{}                         & \multicolumn{1}{c|}{}                              & \multicolumn{1}{c|}{}    & \multicolumn{1}{c|}{}    & \multicolumn{1}{c|}{}    &                             \\ \cline{2-8}
\multicolumn{1}{r|}{2}         & \multicolumn{1}{c|}{}    & \multicolumn{1}{c|}{}    & \multicolumn{1}{c|}{}                         & \multicolumn{1}{c|}{}                              & \multicolumn{1}{c|}{}    & \multicolumn{1}{c|}{}    & \multicolumn{1}{c|}{}    &                             \\ \cline{2-8}
\multicolumn{1}{r|}{$\vdots$}  & \multicolumn{1}{c|}{}    & \multicolumn{1}{c|}{}    & \multicolumn{1}{c|}{}                         & \multicolumn{1}{c|}{}                              & \multicolumn{1}{c|}{}    & \multicolumn{1}{c|}{}    & \multicolumn{1}{c|}{}    &                             \\ \cline{2-9} 
\multicolumn{1}{r|}{$i$}       & \multicolumn{1}{c|}{}    & \multicolumn{1}{c|}{}    & \multicolumn{1}{c|}{}                         & \multicolumn{1}{c|}{$n_{i,j}$}                     & \multicolumn{1}{c|}{}    & \multicolumn{1}{c|}{}    & \multicolumn{1}{c|}{}    & $\sum n_{i,\cdot}$                       \\ \cline{2-9} 
\multicolumn{1}{r|}{$\vdots$}  & \multicolumn{1}{c|}{}    & \multicolumn{1}{c|}{}    & \multicolumn{1}{c|}{}                         & \multicolumn{1}{c|}{}                              & \multicolumn{1}{c|}{}    & \multicolumn{1}{c|}{}    & \multicolumn{1}{c|}{}    &                             \\ \cline{2-8}
\multicolumn{1}{r|}{$n-1$}     & \multicolumn{1}{c|}{}    & \multicolumn{1}{c|}{}    & \multicolumn{1}{c|}{}                         & \multicolumn{1}{c|}{}                              & \multicolumn{1}{c|}{}    & \multicolumn{1}{c|}{}    & \multicolumn{1}{c|}{}    &                             \\ \cline{2-8}
\multicolumn{1}{r|}{$n$}       & \multicolumn{1}{c|}{}    & \multicolumn{1}{c|}{}    & \multicolumn{1}{c|}{}                         & \multicolumn{1}{c|}{}                              & \multicolumn{1}{c|}{}    & \multicolumn{1}{c|}{}    & \multicolumn{1}{c|}{}    &                             \\ \cline{2-8}
\cellcolor[HTML]{EFEFEF}$\sum$ & \cellcolor[HTML]{EFEFEF} & \cellcolor[HTML]{EFEFEF} & \multicolumn{1}{c|}{\cellcolor[HTML]{EFEFEF}} & \multicolumn{1}{c|}{\cellcolor[HTML]{EFEFEF}$\sum n_{\cdot,j}$} & \cellcolor[HTML]{EFEFEF} & \cellcolor[HTML]{EFEFEF} & \cellcolor[HTML]{EFEFEF} & \cellcolor[HTML]{C0C0C0}$N$
\end{tabular}
\caption{$n\times n$ counter matrix}
\end{table}
where each counter $n_{i,j}$ stores the number of times the pair $(i,j)$ appears in the stream.  We can estimate the joint probability $\hat{s}_{i,j}= n_{i,j}/N$. The sum of each row and each column are calculated to estimate the marginal probabilities,
$$\hat{p}_i= \frac{\sum{n_{i,\cdot}}}{N},\qquad \hat{q}_j=\frac{\sum{n_{\cdot,j}}}{N}$$
Ideally, if $X$ and $Y$ are independent, we should have
$$\hat{\Delta}_{i,j}:=\hat{s}_{i,j}-\hat{p}_i\hat{q}_j=0$$
for all $i,j$. The $\ell_1$ difference can be estimated by
\begin{equation}\label{eq:l1}
    \widehat{|s-r|}=|\hat{\Delta}|=\sum_{i,j}|\hat{s}_{i,j}-\hat{p}_i\hat{q}_j|
\end{equation}
and the $\ell_2$ difference can be estimated by
\begin{equation}\label{eq:l2}
\widehat{\lVert s-r\rVert}=\lVert\hat{\Delta}\rVert=\sqrt{\sum_{i,j}(\hat{s}_{i,j}-\hat{p}_i\hat{q}_j)^2}
\end{equation}
We can also calculate the $\chi^2$ statistics for independence test,
\begin{equation}\label{eq:chisq}
    X_{(n-1)^2}^2=\sum_{i,j}\frac{(\hat{s}_{i,j}-\hat{p}_i\hat{q}_j)^2}{\hat{p}_i\hat{q}_j}
\end{equation}
follows a chi-squared distribution with degree of freedom $(n-1)^2$. This statistics can be compared with the critical value $\chi_{0.05,(n-1)^2}^2$ to see if we can reject a null hypothesis: $X$ and $Y$ are independent with $p<0.05$ \cite{zibran2007chi}.
\section{Counter Matrix Algorithm}
The counter matrix above is nice, but it will cost $\mathcal{O}(n^2)$ space, which is bad. The space complexity would be even worse if we want to identify correlation between multiple random variables. Our idea is \textbf{to use an $A\times A$ $(A\ll n)$ matrix to store the information in $n\times n$ counter matrix by randomly summing rows and columns together}.

More specifically, we will choose two random hash functions $h_1,h_2:[n]\rightarrow [A]$, $x\mapsto i$ where $1\leq i\leq A$. Upon seeing a pair $(x,y)$ in the stream, we increment the counter $C_{h_1(x),h_2(y)}$.
\begin{table}[H]
\centering
\begin{tabular}{rlllllllll}
\multicolumn{1}{l}{}          & \multicolumn{1}{c}{1}                         & \multicolumn{1}{c}{2}                         & \multicolumn{1}{c}{$\dots$}                   & \multicolumn{1}{c}{$j$}                                & \multicolumn{1}{c}{$\dots$}                   & \multicolumn{1}{c}{$A-1$}                     & \multicolumn{1}{c}{$A$}                       &    \multicolumn{1}{c}{$\sum$}                &                             \\ \cline{2-8}
\multicolumn{1}{r|}{1}        & \multicolumn{1}{l|}{}                         & \multicolumn{1}{l|}{}                         & \multicolumn{1}{l|}{}                         & \multicolumn{1}{l|}{\cellcolor[HTML]{FFFC9E}}          & \multicolumn{1}{l|}{}                         & \multicolumn{1}{l|}{}                         & \multicolumn{1}{l|}{}                         &                                               \\ \cline{2-8}
\multicolumn{1}{r|}{2}        & \multicolumn{1}{l|}{}                         & \multicolumn{1}{l|}{}                         & \multicolumn{1}{l|}{}                         & \multicolumn{1}{l|}{\cellcolor[HTML]{FFFC9E}}          & \multicolumn{1}{l|}{}                         & \multicolumn{1}{l|}{}                         & \multicolumn{1}{l|}{}                         &                                               \\ \cline{2-8}
\multicolumn{1}{r|}{$\vdots$} & \multicolumn{1}{l|}{}                         & \multicolumn{1}{l|}{}                         & \multicolumn{1}{l|}{}                         & \multicolumn{1}{l|}{\cellcolor[HTML]{FFFC9E}}          & \multicolumn{1}{l|}{}                         & \multicolumn{1}{l|}{}                         & \multicolumn{1}{l|}{}                         &                                               \\ \cline{2-8}
\multicolumn{1}{r|}{$i$}      & \multicolumn{1}{l|}{\cellcolor[HTML]{96FFFB}} & \multicolumn{1}{l|}{\cellcolor[HTML]{96FFFB}} & \multicolumn{1}{l|}{\cellcolor[HTML]{96FFFB}} & \multicolumn{1}{c|}{\cellcolor[HTML]{FFFC9E}$C_{i,j}$} & \multicolumn{1}{l|}{\cellcolor[HTML]{96FFFB}} & \multicolumn{1}{l|}{\cellcolor[HTML]{96FFFB}} & \multicolumn{1}{l|}{\cellcolor[HTML]{96FFFB}} & \cellcolor[HTML]{68CBD0}$\sum C_{i,\cdot}$ \\ \cline{2-8}
\multicolumn{1}{r|}{$\vdots$} & \multicolumn{1}{l|}{}                         & \multicolumn{1}{l|}{}                         & \multicolumn{1}{l|}{}                         & \multicolumn{1}{l|}{\cellcolor[HTML]{FFFC9E}}          & \multicolumn{1}{l|}{}                         & \multicolumn{1}{l|}{}                         & \multicolumn{1}{l|}{}                         &                                               \\ \cline{2-8}
\multicolumn{1}{r|}{$A-1$}    & \multicolumn{1}{l|}{}                         & \multicolumn{1}{l|}{}                         & \multicolumn{1}{l|}{}                         & \multicolumn{1}{l|}{\cellcolor[HTML]{FFFC9E}}          & \multicolumn{1}{l|}{}                         & \multicolumn{1}{l|}{}                         & \multicolumn{1}{l|}{}                         &                                               \\ \cline{2-8}
\multicolumn{1}{r|}{$A$}      & \multicolumn{1}{l|}{}                         & \multicolumn{1}{l|}{}                         & \multicolumn{1}{l|}{}                         & \multicolumn{1}{l|}{\cellcolor[HTML]{FFFC9E}}          & \multicolumn{1}{l|}{}                         & \multicolumn{1}{l|}{}                         & \multicolumn{1}{l|}{}                         &                                               \\ \cline{2-8}
\multicolumn{1}{r}{$\sum$}          &                                               &                                               &                                               & \cellcolor[HTML]{FFC702}$\sum C_{\cdot,j}$          &                                               &                                               &                                               &                 $N$                             

\end{tabular}
\caption{$A\times A$ Counter Matrix}
\end{table}
After reading the stream we can calculate the estimators for joint and marginal probabilities,
$$\Tilde{s}_{i,j}=\frac{C_{i,j}}{N},\qquad \Tilde{p}_{i}=\frac{\sum C_{i,\cdot}}{N},\qquad \Tilde{q}_j=\frac{\sum C_{\cdot, j}}{N}$$
and use them to estimate the $\ell_2$ difference using a similar fashion as equation \ref{eq:l2}.

\begin{algorithm}[H]
	\SetKwInOut{Input}{Input}
	\SetKwInOut{Output}{Output}
	\caption{Approximating $\lVert s-r\rVert^2$ (Counter Matrix Algorithm)} \label{Alg:cm}
	\BlankLine
	\Input{$A$, the size of counter matrix}
	\Output{Estimation for $\lVert s-r\rVert^2$}
	\BlankLine
	$C\gets \mathbf{0}_{A\times A}$; \tcp{An $A\times A$ matrix of zeros}
	$h_1,h_2\gets$ random hash functions that maps $[n]\rightarrow [A]$;\\
	\For{each pair $(i,j)$ in stream}{
	    $C_{h_1(i),h_2(j)}\gets C_{h_1(i),h_2(j)}+1$; \tcp{increment the counter}
	}
	$\Upsilon\gets \frac{1}{(1-1/A)^2} \sum_{x,y\in[A]}\left(\frac{C_{x,y}}{N}-\frac{\sum C_{x,\cdot}\sum C_{\cdot,y}}{N^2}\right)^2$;\\
	\Return $\Upsilon$;
	\BlankLine
\end{algorithm}
Similar to the FM++ algorithm, we run the counter matrix algorithm $B$ times and take the median of all $\Upsilon$ as our final result.
\subsection{Analysis}

We can prove that our resulting algorithm 
\begin{itemize}
    \item is correct if $X$ and $Y$ are actually independent (theorem \ref{theorem:independent}),
    \item can give an unbiased estimate for $\ell_2$ difference (lemma \ref{theorem:l2}), and
    \item has bounded variance (lemma \ref{lemma:var}).
\end{itemize}
We want to prove the correctness of our method. First we show that \textbf{the estimator is correct when $X$ and $Y$ are actually independent}.
\begin{theorem}[Independent Case]\label{theorem:independent}
If $X$ and $Y$ are indeed independent, with sufficiently large $N$, we will have the estimated $\ell_1$ and $\ell_2$ difference be 0, i.e. $|\Tilde{\Delta}|=\lVert\Tilde{\Delta}\rVert =0$, where $\Tilde{\Delta}_{i,j}=\Tilde{s}_{i,j}-\Tilde{p}_{i}\Tilde{q}_j$.
\end{theorem}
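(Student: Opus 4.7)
The plan is to reduce the claim to showing that each entry $\tilde{\Delta}_{i,j}$ of the $A\times A$ discrepancy matrix vanishes, since both $|\tilde{\Delta}|$ and $\lVert \tilde{\Delta}\rVert$ are zero as soon as every entry is. Thus the core task is to argue that, when $X$ and $Y$ are independent and $N$ is large enough, the hashed joint empirical frequency $\tilde{s}_{i,j} = C_{i,j}/N$ factorises as the product $\tilde{p}_i\, \tilde{q}_j$ of the hashed marginals.

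First I would fix any realisation of the hash functions $h_1, h_2:[n]\to[A]$ and condition on them, treating them as deterministic for the rest of the argument. Under this conditioning, $h_1(X)$ and $h_2(Y)$ are two random variables taking values in $[A]$; the independence of $X$ and $Y$ propagates through these deterministic maps to give
\begin{equation*}
\Pr[h_1(X)=i,\, h_2(Y)=j] \;=\; \Pr[h_1(X)=i]\cdot\Pr[h_2(Y)=j].
\end{equation*}
Call the right-hand factors $P_i$ and $Q_j$. The stream consists of $N$ i.i.d.\ samples from $(X,Y)$, so by the strong law of large numbers applied to the indicator sequences $\mathbf{1}\{h_1(X_t)=i,\,h_2(Y_t)=j\}$, $\mathbf{1}\{h_1(X_t)=i\}$, and $\mathbf{1}\{h_2(Y_t)=j\}$, we have
\begin{equation*}
\tilde{s}_{i,j} \;\to\; P_i Q_j, \qquad \tilde{p}_i \;\to\; P_i, \qquad \tilde{q}_j \;\to\; Q_j
\end{equation*}
almost surely as $N\to\infty$, for every $i,j\in[A]$. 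Hence $\tilde{\Delta}_{i,j} = \tilde{s}_{i,j} - \tilde{p}_i\tilde{q}_j \to 0$, and since this limit holds simultaneously for all $A^2$ entries, both $|\tilde{\Delta}|$ and $\lVert\tilde{\Delta}\rVert$ tend to zero in the limit of large $N$.

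The main subtlety, and really the only thing to be careful about, is the meaning of ``sufficiently large $N$.'' A finite sample will generally not give $\tilde{\Delta}_{i,j}$ exactly zero even under independence; the statement is an asymptotic one, best read as: for every $\varepsilon>0$ there is some $N_0$ (depending on $A$ and the marginals of $h_1(X), h_2(Y)$) such that with high probability $|\tilde{\Delta}_{i,j}|<\varepsilon$ for all $i,j$. A quantitative version could be obtained by Hoeffding's inequality on each of the $A^2$ empirical probabilities and a union bound, but for the purposes of this theorem the law-of-large-numbers argument above is enough. Finally, because the conclusion holds for every realisation of $h_1, h_2$, it holds unconditionally over the randomness of the hash functions as well.
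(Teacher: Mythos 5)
Your proof is correct, and it reaches the same conclusion as the paper's by a slightly different route. The paper's proof expands each counter as a sum over hash preimages, writes $\Tilde{\Delta}_{i,j}\rightarrow \sum_{(x,y)\in h^{-1}(i,j)}(s_{x,y}-p_xq_y)$, and kills the limit term by term using $s_{x,y}=p_xq_y$. You instead observe that independence is preserved under the deterministic maps $h_1,h_2$, so that $h_1(X)$ and $h_2(Y)$ are themselves independent random variables on $[A]$, and then apply the law of large numbers directly to the coarse-grained empirical table; the limit is $P_iQ_j-P_iQ_j=0$ with no need to open up the preimage sums. The two arguments are mathematically equivalent (under independence, $\sum_{(x,y)\in h^{-1}(i,j)}s_{x,y}=P_iQ_j=\sum_{(x,y)\in h^{-1}(i,j)}p_xq_y$), but yours is the more conceptual packaging and makes it transparent that the result holds for \emph{every} realisation of the hash functions. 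Your closing remark is also a genuine improvement on the paper: for finite $N$ the entries $\Tilde{\Delta}_{i,j}$ are not exactly zero, so the theorem is really an almost-sure limit (or a high-probability $\varepsilon$-bound via Hoeffding plus a union bound over the $A^2$ cells), a caveat the paper's statement and proof both gloss over by writing ``$=0$'' for ``sufficiently large $N$.''
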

\begin{proof}
Let us denote $h_1^{-1}(i):=\{x:h_1(x)=i\}$ the set of all $x$ that maps to $i$, and similarly $h_2^{-1}(j):=\{y:h_2(y)=j\}$. Let $h^{-1}(i,j):=h_1^{-1}(i)\times h_2^{-1}(j)$ be all the pairs $(x,y)$ that maps to $(i,j)$. Then for each $(i,j)\in[A]^2$, we will have the value of the counter be
$$C_{i,j}=\sum_{(x,y)\in h^{-1}(i,j)}n_{x,y}$$
and the sum of the corresponding row and column be,
$$\sum C_{i,\cdot}=\sum_{x\in h_1^{-1}(i)}n_{x,\cdot},\qquad \sum C_{\cdot,j}=\sum_{y\in h_2^{-1}(y)}n_{\cdot,y}$$
Therefore we will have $\Tilde{p}_i\Tilde{q}_j=\sum_{(x,y)\in h^{-1}(i,j)}n_{x,\cdot}n_{\cdot,y}/N^2$. Now, for sufficiently large $N$, we will have $n_{x,y}\rightarrow Ns_{x,y}, \sum n_{x,\cdot}\rightarrow Np_x$ and $\sum n_{\cdot,y}\rightarrow Nq_y$. Therefore,
$$\Tilde{\Delta}_{i,j}=\Tilde{s}_{i,j}-\Tilde{p}_{i}\Tilde{q}_j=\sum_{(x,y)\in h^{-1}(i,j)}\frac{n_{x,y}}{N}-\frac{n_{x,\cdot}n_{\cdot,y}}{N^2}\rightarrow \sum_{(x,y)\in h^{-1}(i,j)}\underbrace{s_{x,y}-p_xq_y}_{=0}=0$$
for all $i,j$. Therefore the $\ell_1$ and $\ell_2$ norm will be 0.
\end{proof}

If $X$ and $Y$ are correlated, however, things will get complicated. For $\ell_2$ difference case, let us consider the $\ell_2$ difference squared, i.e.
    $$\lVert\hat{\Delta}\rVert^2 =\sum_{i,j\in[n]}(\hat{\Delta}_{i,j})^2=\sum_{i,j\in[n]}(\hat{s}_{i,j}-\hat{p}_i\hat{q}_j)^2$$
    In the $A\times A$ counter matrix case, some of the $\hat{s}_{i,j}-\hat{p}_i\hat{q}_j$ will be added together then squared,
    \begin{eqnarray*}
    \lVert\Tilde{\Delta}\rVert^2 &=&\sum_{i,j\in[A]}(\Tilde{s}_{i,j}-\Tilde{p}_i\Tilde{q}_j)^2=\sum_{i,j\in[A]}\left(\sum_{(x,y)\in h^{-1}(i,j)}\hat{s}_{x,y}-\hat{p}_x\hat{q}_j\right)^2=\sum_{i,j\in[A]}\left(\sum_{(x,y)\in h^{-1}(i,j)}\hat{\Delta}_{x,y}\right)^2\\
    &=&\underbrace{\sum_{i,j\in[n]}(\hat{\Delta}_{i,j})^2}_{=\lVert\hat{\Delta}\rVert^2}+\underbrace{\sum_{\substack{h(i,j)=h(k,l)\\(i,j)\ne(k,l)}}\hat{\Delta}_{i,j}\hat{\Delta}_{k,l}}_{=:X}=\lVert\hat{\Delta}\rVert^2+X
    \end{eqnarray*}
    And we can see that the $\ell_2$ estimate given by our $A\times A$ matrix can be expressed by $\lVert\hat{\Delta}\rVert^2$ plus a term $X$, where $X$ is just the sum of product of $\hat{\Delta}$ of all the colliding pairs in our hash function. We thus want to study the distribution of $X$.
    
    Before going into this, we want to show a property of $\hat{\Delta}$.
    \begin{lemma}[Property of $\hat{\Delta}$]\label{lemma:sum}
    The sum of each row or each column of $\hat{\Delta}$ must be 0.
    \end{lemma}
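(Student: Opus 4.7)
The plan is to prove the row-sum claim directly from the definitions and then invoke symmetry for the column case. Fix a row index $i$. Unpacking the definition of $\hat{\Delta}$ and linearity of summation gives
\begin{equation*}
\sum_{j=1}^{n} \hat{\Delta}_{i,j} \;=\; \sum_{j=1}^{n} \hat{s}_{i,j} \;-\; \hat{p}_i \sum_{j=1}^{n} \hat{q}_j.
\end{equation*}
So the problem reduces to identifying the two sums on the right.

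The first step is to show $\sum_{j} \hat{s}_{i,j} = \hat{p}_i$. This is immediate from the construction of the $n\times n$ counter matrix: summing $n_{i,j}/N$ over $j$ gives the row total $\sum n_{i,\cdot}/N$, which is exactly the definition of $\hat{p}_i$. The second step is to show $\sum_{j} \hat{q}_j = 1$. Again this follows by expanding $\hat{q}_j = \sum n_{\cdot,j}/N$ and observing that summing the column totals over all columns recovers the grand total $N$, divided by $N$.

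Substituting both identities back yields $\sum_j \hat{\Delta}_{i,j} = \hat{p}_i - \hat{p}_i \cdot 1 = 0$. The column case is entirely symmetric: fix column $j$, expand $\sum_i \hat{\Delta}_{i,j} = \sum_i \hat{s}_{i,j} - \hat{q}_j \sum_i \hat{p}_i = \hat{q}_j - \hat{q}_j = 0$.

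There is no real obstacle here; the lemma is a consistency property of the empirical marginal estimators, namely that $\hat{p}$ and $\hat{q}$ are exactly the row and column marginals of $\hat{s}$. The only thing to be slightly careful about is the sentence ``with sufficiently large $N$'' from the previous theorem: this lemma does not need that hypothesis, it holds for every $N\ge 1$, because it is a purely combinatorial identity about the counter matrix rather than a statement about convergence to the true distribution.
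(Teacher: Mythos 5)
Your proof is correct and follows essentially the same route as the paper's: both reduce the row sum to the identities $\sum_j \hat{s}_{i,j}=\hat{p}_i$ and $\sum_j \hat{q}_j=1$ and conclude $\hat{p}_i-\hat{p}_i=0$, with the column case by symmetry. If anything, your version is tidier --- the paper's displayed computation accidentally sums over both indices (proving the corollary rather than the per-row claim) --- and your closing remark that the identity is exact for every $N\ge 1$, needing no large-$N$ hypothesis, is a correct and worthwhile clarification.
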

    \begin{proof}
    For row $i$, for example, the sum of all elements on row $i$ would be
    \begin{eqnarray*}
    \sum \hat{\Delta}_{i,\cdot}&=&\sum_{y\in[A]}\sum_{x\in [A]}(\hat{s}_{x,y}-\hat{p}_x\hat{q}_y)=\sum_{x\in [A]}\underbrace{\sum_{y\in[A]}\hat{s}_{x,y}}_{=\hat{p}_x}-\sum_{x\in [A]}\hat{p}_x\underbrace{\sum_{y\in[A]}\hat{q}_y}_{=1}\\
    &=&\sum_{x\in [A]}\hat{p}_x-\sum_{x\in [A]}\hat{p}_x=0
    \end{eqnarray*}
    The column case can be proved by a similar fashion.
    \end{proof}
    The following result immediately follows from the above lemma.
    \begin{corollary}\label{cor:sum}
    All elements in $\hat{\Delta}$ sums to 0.
    \end{corollary}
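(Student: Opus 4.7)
The plan is to derive the corollary as an immediate consequence of Lemma \ref{lemma:sum}, without any fresh computation. Since the total sum over all entries of $\hat{\Delta}$ can be rewritten as an iterated sum, first over columns within a fixed row and then over rows, I would simply apply the lemma to the inner sum.

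Concretely, I would write
\[
\sum_{i,j} \hat{\Delta}_{i,j} \;=\; \sum_{i} \Bigl( \sum_{j} \hat{\Delta}_{i,j} \Bigr) \;=\; \sum_{i} \sum \hat{\Delta}_{i,\cdot} \;=\; \sum_{i} 0 \;=\; 0,
\]
invoking Lemma \ref{lemma:sum} at the middle equality. Alternatively, the same conclusion follows by summing columns first and then rows, using the column version of the lemma, which provides a sanity check that the two orders of summation agree (as they must by Fubini for finite sums).

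There is essentially no obstacle here: the corollary is a one-line consequence of the lemma just established, and the only care needed is to be explicit that the double sum decomposes as an iterated sum over the finite index set $[A]\times[A]$ (or $[n]\times[n]$ in the full counter case), so that one may legitimately apply the row-sum identity inside. I would therefore keep the proof to the displayed computation above, without introducing any additional notation or machinery.
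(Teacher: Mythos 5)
Your proof is correct and matches the paper's intent exactly: the paper states that the corollary ``immediately follows'' from Lemma \ref{lemma:sum}, and your iterated-sum computation $\sum_{i,j}\hat{\Delta}_{i,j}=\sum_i\bigl(\sum_j\hat{\Delta}_{i,j}\bigr)=\sum_i 0=0$ is precisely that immediate consequence. No gap or divergence from the paper's approach.
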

    Now, let us investigate the expectation of $X$. The probability of the term $\hat{\Delta}_{i,j}\hat{\Delta}_{k,l}$ appear in $X$ is equal to the probability of $(k,l)$ having the same hash values as $(i,j)$ (collide).
    $$\Pr[h(k,l)=h(i,j)]=\left\{\begin{array}{ll}
         1/A^2&\text{if }i\ne k, j\ne l  \\
         1/A&\text{ if}i=k\text{ or }j=l 
    \end{array}\right.$$
    since if $i=k$ or $j=l$, we only need to ensure the other dimension to have the same hash values, while if $i\ne k, j\ne l$, we need both pairs to have the same hash values.
    \begin{table}[H]
    \centering
\begin{tabular}{rccccccc}
                              & 1                      & 2                      & $\dots$                      & $j$                                                                          & $\dots$                    & $n-1$                    & $n$                    \\ \cline{2-8} 
\multicolumn{1}{r|}{1}        & \multicolumn{3}{c|}{\cellcolor[HTML]{EFEFEF}}                                  & \multicolumn{1}{c|}{\cellcolor[HTML]{C0C0C0}}                                & \multicolumn{3}{c|}{\cellcolor[HTML]{EFEFEF}}                                  \\
\multicolumn{1}{r|}{2}        & \multicolumn{3}{c|}{\cellcolor[HTML]{EFEFEF}}                                  & \multicolumn{1}{c|}{\cellcolor[HTML]{C0C0C0}}                                & \multicolumn{3}{c|}{\cellcolor[HTML]{EFEFEF}}                                  \\
\multicolumn{1}{r|}{$\vdots$} & \multicolumn{3}{c|}{\multirow{-3}{*}{\cellcolor[HTML]{EFEFEF}$\frac{1}{A^2}$}} & \multicolumn{1}{c|}{\multirow{-3}{*}{\cellcolor[HTML]{C0C0C0}$\frac{1}{A}$}} & \multicolumn{3}{c|}{\multirow{-3}{*}{\cellcolor[HTML]{EFEFEF}$\frac{1}{A^2}$}} \\ \cline{2-8} 
\multicolumn{1}{r|}{$i$}      & \multicolumn{3}{c|}{\cellcolor[HTML]{C0C0C0}$\frac{1}{A}$}                     & \multicolumn{1}{c|}{\cellcolor[HTML]{656565}}                                & \multicolumn{3}{c|}{\cellcolor[HTML]{C0C0C0}$\frac{1}{A}$}                     \\ \cline{2-8} 
\multicolumn{1}{r|}{$\vdots$} & \multicolumn{3}{c|}{\cellcolor[HTML]{EFEFEF}}                                  & \multicolumn{1}{c|}{\cellcolor[HTML]{C0C0C0}}                                & \multicolumn{3}{c|}{\cellcolor[HTML]{EFEFEF}}                                  \\
\multicolumn{1}{r|}{$n-1$}    & \multicolumn{3}{c|}{\cellcolor[HTML]{EFEFEF}}                                  & \multicolumn{1}{c|}{\cellcolor[HTML]{C0C0C0}}                                & \multicolumn{3}{c|}{\cellcolor[HTML]{EFEFEF}}                                  \\
\multicolumn{1}{r|}{$n$}      & \multicolumn{3}{c|}{\multirow{-3}{*}{\cellcolor[HTML]{EFEFEF}$\frac{1}{A^2}$}} & \multicolumn{1}{c|}{\multirow{-3}{*}{\cellcolor[HTML]{C0C0C0}$\frac{1}{A}$}} & \multicolumn{3}{c|}{\multirow{-3}{*}{\cellcolor[HTML]{EFEFEF}$\frac{1}{A^2}$}} \\ \cline{2-8} 
\end{tabular}
\caption{Illustration of probability that a pair collides with (has the same hash values as) $(i,j)$.}
\end{table}
Using lemma \ref{lemma:sum} and corollary \ref{cor:sum}, the expectation of the sum of all terms containing $\hat{\Delta}_{i,j}$ in $X$ is then
\begin{eqnarray*}
&&\hat{\Delta}_{i,j}\left[\frac{1}{A^2}\underbrace{\sum_{(i,j)\ne (k,l)}\hat{\Delta}_{k,l}}_{=-\hat{\Delta}_{i,j}}+\left(\frac{1}{A}-\frac{1}{A^2}\right)\underbrace{\sum_{i\ne k}\hat{\Delta}_{k,j}}_{=-\hat{\Delta}_{i,j}}+\left(\frac{1}{A}-\frac{1}{A^2}\right)\underbrace{\sum_{j\ne l}\hat{\Delta}_{i,l}}_{=-\hat{\Delta}_{i,j}}\right]\\
&=&-\left(\frac{2}{A}-\frac{1}{A^2}\right)\hat{\Delta}_{i,j}^2
\end{eqnarray*}
Therefore, the expectation of $X$ is then
$$\E[X]=-\left(\frac{2}{A}-\frac{1}{A^2}\right)\sum_{i,j\in[n]}\hat{\Delta}_{i,j}^2=-\left(\frac{2}{A}-\frac{1}{A^2}\right)\lVert \hat{\Delta}\rVert^2$$
Meaning that
$$\E[\lVert\Tilde{\Delta}\rVert^2]=\left(1-\frac{2}{A}+\frac{1}{A^2}\right)\lVert \hat{\Delta}\rVert^2=\left(1-\frac{1}{A}\right)^2\lVert \hat{\Delta}\rVert^2$$
This gives rise to our theorem,
\begin{lemma}[Unbiased Estimator]\label{theorem:l2}
$\lVert \Tilde{\Delta}\rVert^2/(1-1/A)^2$ is going to be an unbiased estimator for $\lVert \hat{\Delta}\rVert^2$.
\end{lemma}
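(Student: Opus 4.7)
The plan is to show $\E[\lVert\Tilde{\Delta}\rVert^2] = (1-1/A)^2 \lVert\hat{\Delta}\rVert^2$, after which dividing by $(1-1/A)^2$ yields unbiasedness immediately. Almost all of the work has been laid out just above the lemma, so the proof amounts to organizing it cleanly. The decomposition $\lVert\Tilde{\Delta}\rVert^2 = \lVert\hat{\Delta}\rVert^2 + X$ with
\[
X = \sum_{\substack{(i,j)\ne(k,l)\\ h(i,j)=h(k,l)}} \hat{\Delta}_{i,j}\hat{\Delta}_{k,l}
\]
(as an ordered-pair sum) has already been established, so the task reduces to computing $\E[X]$; the only randomness lies in the two uniform hashes $h_1,h_2:[n]\to[A]$.

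I would fix $(i,j)$, view it as the ``first'' factor, and compute the expected sum of all $X$-terms with this first factor; summing over $(i,j)$ then recovers $\E[X]$. For a distinct $(k,l)$ the collision probability is $1/A^2$ when $k\ne i$ and $l\ne j$, and $1/A$ when exactly one coordinate matches. Rewriting this via inclusion-exclusion splits the expected contribution into three inner sums,
\[
\hat{\Delta}_{i,j}\left[\tfrac{1}{A^2}\sum_{(k,l)\ne(i,j)}\hat{\Delta}_{k,l} + \Bigl(\tfrac{1}{A}-\tfrac{1}{A^2}\Bigr)\sum_{k\ne i}\hat{\Delta}_{k,j} + \Bigl(\tfrac{1}{A}-\tfrac{1}{A^2}\Bigr)\sum_{l\ne j}\hat{\Delta}_{i,l}\right].
\]
Corollary \ref{cor:sum} reduces the first inner sum to $-\hat{\Delta}_{i,j}$, and lemma \ref{lemma:sum} reduces the column-$j$ and row-$i$ sums (with one entry removed) to the same value. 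All three inner sums therefore collapse to $-\hat{\Delta}_{i,j}$, leaving a per-$(i,j)$ contribution of $-(2/A-1/A^2)\hat{\Delta}_{i,j}^2$. Summing over $(i,j)$ gives
\[
\E[\lVert\Tilde{\Delta}\rVert^2] = \lVert\hat{\Delta}\rVert^2 - \Bigl(\tfrac{2}{A}-\tfrac{1}{A^2}\Bigr)\lVert\hat{\Delta}\rVert^2 = \Bigl(1-\tfrac{1}{A}\Bigr)^2\lVert\hat{\Delta}\rVert^2,
\]
and the stated unbiasedness follows by dividing through by $(1-1/A)^2$.

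I expect the main obstacle to be the inclusion-exclusion bookkeeping: every ordered distinct pair $(k,l)\ne(i,j)$ must receive exactly its own collision probability across the three buckets, which is why the two single-coordinate-match buckets carry coefficient $1/A - 1/A^2$ rather than $1/A$. A secondary subtlety is that the entire calculation relies on $h_1$ and $h_2$ being independent and individually (at least) 2-wise independent on $[A]$, so that the $1/A^2$ and $1/A$ collision probabilities are justified; I would make this hypothesis on the hash family explicit, since it is the only non-trivial assumption used in the derivation.
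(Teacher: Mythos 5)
Your proposal is correct and follows essentially the same route as the paper: the same decomposition $\lVert\Tilde{\Delta}\rVert^2 = \lVert\hat{\Delta}\rVert^2 + X$, the same collision-probability case split with the $1/A - 1/A^2$ inclusion-exclusion coefficients, and the same use of Lemma \ref{lemma:sum} and Corollary \ref{cor:sum} to collapse each inner sum to $-\hat{\Delta}_{i,j}$. Your explicit remark that the argument requires $h_1,h_2$ to be independent and pairwise independent on $[A]$ is a hypothesis the paper leaves implicit, and is worth stating.
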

This is a neat result, as it is not dependent on $n$ or $N$. This lemma indicates that we can, again, create multiple $A\times A$ counter matrices, divide the resulting squared $\ell_2$ difference with $(1-1/A)^2$, and take the mean to get an unbiased estimation.

\textbf{Remark.} We should have $A>1$ for this algorithm to be meaningful. If we set $A=1$, the expectation $\E[\lVert\Tilde{\Delta}\rVert^2]=0$. In fact, if we only use 1 number, then $s_{1,1}=p_1=q_1=1$, so the resulting norm will always be zero.

We can also prove that the variance is also bounded,
\begin{lemma}[Bounded Variance]\label{lemma:var}
The variance of estimator $\Var[\lVert \Tilde{\Delta}\rVert^2]\leq 8\lVert\hat{\Delta}\rVert^4/A$.
\end{lemma}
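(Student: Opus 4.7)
The plan is to reduce the problem to computing $\Var[X]$, where $X = \lVert\Tilde{\Delta}\rVert^2 - \lVert\hat{\Delta}\rVert^2$ is the collision cross-term already introduced in the derivation of lemma \ref{theorem:l2}. Since $\lVert\hat{\Delta}\rVert^2$ is deterministic once the stream is fixed (the randomness lives entirely in $h_1,h_2$), we have $\Var[\lVert\Tilde{\Delta}\rVert^2] = \Var[X] = \E[X^2] - (\E[X])^2$, and $\E[X] = -(2/A - 1/A^2)\lVert\hat{\Delta}\rVert^2$ is already in hand.

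First I would expand
$$X^2 = \sum_{(i,j)\neq (k,l)}\sum_{(p,q)\neq (r,s)} \hat{\Delta}_{i,j}\hat{\Delta}_{k,l}\hat{\Delta}_{p,q}\hat{\Delta}_{r,s}\,\mathds{1}[h(i,j){=}h(k,l)]\,\mathds{1}[h(p,q){=}h(r,s)].$$
To evaluate $\E[X^2]$ I would assume $h_1,h_2$ are 4-wise independent, which makes the joint collision probability of any two pairs of cells a clean product of $1/A$ factors: one factor per distinct first-coordinate equality that the two collisions force, one per distinct second-coordinate equality. I would then partition the quadruple sum by the coincidence pattern among the four cells $(i,j),(k,l),(p,q),(r,s)$: all four distinct, exactly one pair of cells coincides, $\{(i,j),(k,l)\}=\{(p,q),(r,s)\}$ as sets, and so on, with a sub-classification on the first and second coordinates separately (sharing a row, sharing a column, both, or neither).

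Within each class I would collapse the resulting index sums using lemma \ref{lemma:sum} and corollary \ref{cor:sum}, exactly as in the computation of $\E[X]$: for example $\sum_{l \neq j}\hat{\Delta}_{i,l}$ becomes $-\hat{\Delta}_{i,j}$, and any free row or column sum is simply $0$. The surviving terms are sums of products of at most four $\hat{\Delta}_{i,j}$'s, which I would bound in absolute value either by Cauchy--Schwarz or directly using $\sum_{i,j}\hat{\Delta}_{i,j}^2 = \lVert\hat{\Delta}\rVert^2$ and $(\sum_{i,j}\hat{\Delta}_{i,j}^2)^2 = \lVert\hat{\Delta}\rVert^4$. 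I expect the dominant surviving contribution to come from the class $\{(i,j),(k,l)\}=\{(p,q),(r,s)\}$, giving a term of order $\lVert\hat{\Delta}\rVert^4/A$; every other class should contribute at most the same order after cancellation, so that the final bound is $8\lVert\hat{\Delta}\rVert^4/A$.

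The main obstacle is the bookkeeping: there are roughly a dozen distinct coincidence patterns, and for each one I must keep straight (i) the correct joint collision probability, (ii) the partially summed expression in $\hat{\Delta}$ that results after applying the row/column identities, and (iii) the combinatorial multiplicity coming from reindexing an ordered quadruple sum. The delicate step is to verify that the genuinely $O(1)$ and $O(1/A)$ contributions to $\E[X^2]$ exactly match $(\E[X])^2 = (2/A - 1/A^2)^2 \lVert\hat{\Delta}\rVert^4$, so that after subtraction only terms decaying like $1/A$ survive; nothing else in the computation is nontrivial once that cancellation is secured. Summing absolute values of constants across the remaining classes then gives the advertised factor of $8$.
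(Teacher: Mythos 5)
Your plan follows the paper's proof essentially verbatim: reduce to $\Var[X]=\E[X^2]-\E[X]^2$ for the collision cross-term $X$, classify the quadruples of cells by their coincidence pattern (same row, same column, coincident pairs, rectangles), collapse the index sums using Lemma \ref{lemma:sum} and Corollary \ref{cor:sum}, and bound the surviving coincident-pair and rectangle-type sums by $\lVert\hat{\Delta}\rVert^4$ to extract the $O(1/A)$ bound. The paper organizes the subtraction of $\E[X]^2$ by introducing an independent copy $h'$ of the hash pair and summing only over tuples where the two collision indicators are correlated, but that is just a bookkeeping device for the same case analysis you describe.
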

The proof is pretty long, so we put it in the appendix. This result shows that we can utilize Chebyshev's inequality to limit the error rate.
\begin{theorem}[Counter Matrix Algorithm]
There exists a one-pass, $\mathcal{O}(\epsilon^{-4}\log\delta^{-1})$-space algorithm to calculate an estimator $\Upsilon$ such that $\Upsilon\in (1\pm\epsilon)\lVert \hat{\Delta}\rVert^2$ with probability at least $1-\delta$.
\end{theorem}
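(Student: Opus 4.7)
The plan is to combine Lemmas \ref{theorem:l2} and \ref{lemma:var} with the standard median-of-means (or median-of-estimators) boosting, analogous to the FM++ construction alluded to after Algorithm \ref{Alg:cm}. Let $\Upsilon = \lVert\Tilde{\Delta}\rVert^2/(1-1/A)^2$ denote the output of one run of the counter matrix algorithm. By Lemma \ref{theorem:l2}, $\E[\Upsilon]=\lVert\hat{\Delta}\rVert^2$, and by Lemma \ref{lemma:var} together with $\Var[\Upsilon]=\Var[\lVert\Tilde{\Delta}\rVert^2]/(1-1/A)^4$, we get
\[
\Var[\Upsilon]\;\le\;\frac{8\,\lVert\hat{\Delta}\rVert^4}{A(1-1/A)^4}\;\le\;\frac{c\,\lVert\hat{\Delta}\rVert^4}{A}
\]
for some absolute constant $c$ as soon as, say, $A\ge 2$.

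The first step is to apply Chebyshev's inequality to a single run. We have
\[
\Pr\!\bigl[\,|\Upsilon-\lVert\hat{\Delta}\rVert^2|>\epsilon\lVert\hat{\Delta}\rVert^2\,\bigr]\;\le\;\frac{\Var[\Upsilon]}{\epsilon^{2}\lVert\hat{\Delta}\rVert^{4}}\;\le\;\frac{c}{A\epsilon^{2}}.
\]
Choosing $A=\Theta(\epsilon^{-2})$ with a large enough constant makes the right-hand side at most $1/3$, so one run produces a $(1\pm\epsilon)$-multiplicative estimate of $\lVert\hat{\Delta}\rVert^2$ with probability at least $2/3$. Each such counter matrix consumes $A^{2}=\mathcal{O}(\epsilon^{-4})$ counters.

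Next, to drive the failure probability down to $\delta$, I would run $B=\Theta(\log\delta^{-1})$ independent copies in parallel (with independent pairs of hash functions) and output the median of the $B$ estimates. Let $Z_k$ be the indicator that copy $k$ fails to lie in $(1\pm\epsilon)\lVert\hat{\Delta}\rVert^{2}$; the $Z_k$ are independent Bernoullis with mean at most $1/3$, and the median fails only if at least $B/2$ of them are $1$. A standard Chernoff bound on $\sum_k Z_k$ then gives failure probability at most $\delta$ for $B=\Theta(\log\delta^{-1})$. The total space is $B\cdot A^{2}=\mathcal{O}(\epsilon^{-4}\log\delta^{-1})$ counters, and the algorithm is obviously one-pass since each counter matrix is updated in a single increment per stream element.

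The only non-routine pieces are already discharged by the preceding lemmas: unbiasedness (Lemma \ref{theorem:l2}) and the variance bound (Lemma \ref{lemma:var}), the latter being the genuinely hard calculation in the paper (pushed to the appendix). Given those, the main obstacle here is purely bookkeeping: picking the constants in $A$ so Chebyshev yields a strictly sub-$1/2$ single-run failure probability, and invoking a clean Chernoff bound on the median. One small point to be careful about is the counter-size: we only count counters, so the $\log N$ bits per counter are absorbed into the word-level model used throughout the paper, matching the claimed $\mathcal{O}(\epsilon^{-4}\log\delta^{-1})$ space.
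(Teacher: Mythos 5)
Your proposal is correct and follows essentially the same route as the paper: Chebyshev on a single run with $A=\Theta(\epsilon^{-2})$ to get constant failure probability, then a median of $B=\Theta(\log\delta^{-1})$ independent copies via Chernoff, for $A^2B=\mathcal{O}(\epsilon^{-4}\log\delta^{-1})$ space. Your version is slightly more careful than the paper's (you track the $(1-1/A)^{-4}$ factor in the variance of the rescaled estimator and state the single-run failure probability correctly), but the argument is the same.
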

\begin{proof}
We can take $A=32/\epsilon^2$ such that by Chebyshev's Inequality, we have can have $\Upsilon\in (1\pm\epsilon)\lVert \hat{\Delta}\rVert^2$ with probability at least $1/4$, and we can take $B=32\log (2/\delta)$ to ensure that the median is a correct answer with probability at least $1-\delta$. In total we need $A^2B=\mathcal{O}(\epsilon^{-4}\log\delta^{-1})$ space.
\end{proof}
This is, however, a pretty loose upper bound. We can see our experiment that the variance $\Var[\Upsilon]=\mathcal{O}(\lVert\hat{\Delta}\rVert^4/A^2)$ when $A$ is large. Therefore, in reality we can achieve $\mathcal{O}(\epsilon^{-2}\log\delta^{-1})$-space, which is similar to the sketching of sketches algorithm.
\subsection{Implementation}
For now, we have completed the implementation of this algorithm estimating the $\ell_2$ difference. We choose the hash functions to be
$$h_1(x)=(ax+b)\mod p \mod A,\qquad h_2(x)=(cx+d)\mod p \mod A$$
where $p$ is a random prime that is significantly larger than $A$ and $a,b,c,d<p$ are randomly chosen. We only need to compute two hash functions and store the 4 parameters for each counter matrix. Comparing this with at least 8 parameters for each run of the sketching of sketches algorithm, our algorithm is quite fast and at least 2 times more memory-saving.

\section{Results}\label{sec:experiment}
The code and benchmark for our paper is given here: \url{github.com/GZHoffie/CS5234-mini-project}.
\subsection{Data Set Generation}\label{dataset}

We need to generate two types of distribution for sample stream:

\begin{enumerate}
    \item \textbf{$X$ and $Y$ are independent}: We first generate 2 marginal probability arrays $p_x$ and $q_y$ of size $n$ with each array sums to 1. Then we generated each element pairs using the probability arrays obtained.
    \item \textbf{$X$ and $Y$ are correlated}: Generate the probability matrix $s_{i,j}$ for each $i,j\in[n]$ and normalize it such that all values sums up to 1. Then we generated each element pairs using the probability matrix obtained. 
\end{enumerate}

We use two types of distributions:
\begin{enumerate}
    \item \textbf{Random distribution}: where $s_{i,j}$ or $p_i$ and $q_j$ are randomly generated.
    \item \textbf{Zipfian distribution}: where $s_{i,j}$ or $p_i$ and $q_j$ follows Zipf's Law \cite{zipf2016human}, where the second largest probability is $1/2$ of the largest probability, and the third largest is $1/3$, so on and so forth.
\end{enumerate}

We also use a $n\times n$ counter matrix to calculate the values for the exact estimate of the $\ell_2$ difference, $\lVert\hat{\Delta}\rVert$, and store them to calculate the multiplicative error.

In the following experiments, we used 4 generated data sets with $N=1000000$ and $n=10000$, produced by our data generation scripts  mentioned in \ref{dataset}. Each data set is generated with a different combination distribution type and whether the two stream are independent or not.

\begin{enumerate}
    \item Random distribution, dependent. $\lVert\hat{\Delta}\rVert\approx 0.0010$.
    \item Random distribution, independent. $\lVert\hat{\Delta}\rVert\approx 0.00099$ (similar to the dependent case since real $\lVert\Delta\rVert\approx 5.77\times 10^{-5}$ is very small).
    \item Zipfian distribution, dependent. $\lVert\hat{\Delta}\rVert\approx 0.065$.
    \item Zipfian distribution, independent. $\lVert\hat{\Delta}\rVert\approx 0.00097$.
\end{enumerate}

\subsection{Metrics}

We can compare the estimated result with the exact solution. In particular, we used the multiplicative error,
$$\epsilon=\left|1-\frac{\Upsilon-\lVert r-s\rVert}{\lVert r-s\rVert}\right|$$
to check how far our estimator $\Upsilon$ is from the actual $\ell_2$ difference.

\subsection{Experiments}

We designed 2 experiments to evaluate our counter matrix algorithm.

\begin{enumerate}
    \item Given fixed space, determine the trade-off of parameters $A$ and $B$ used in the algorithm
    \item Given fixed space, evaluate the multiplicative error by counter matrix algorithm and algorithm \ref{Alg:l2}
\end{enumerate}

\subsubsection{Effects of Parameters $A$ and $B$}

In this experiment, for each data set we ran a set of sub-experiments. By doing a grid search of changing $A$ and $B$ values, we can find out how the multiplicative error change. We have selected 5 $A$ values and 5 $B$ values on purpose such that if we arrange those values in a table where the cells in the same diagonal direction from top-right to bottom-left (indicated using the same color in the table) will use the same amount of space, $A^2B$.

Since storing the exact probability table for the stream takes $\mathcal{O}(n^2)$ space, the counter matrix algorithm will only make sense if $A^2B= o(n^2)$. Also, for each $A$ and $B$, type of distribution and dependence, we repeat the experiment 5 times. The results below shows the average multiplicative error for the 5 runs. 

\begin{table}[H]
\centering
\begin{tabular}{l c c c c c}
\hline\hline
&A=2 & A=4 & A=8 & A=16 & A=32 \\ [0.5ex] 
\hline
B=1  & \colorbox{red!30}{0.450537}  & \colorbox{green!30}{0.192048}  & \colorbox{blue!30}{0.053852} & \colorbox{teal!30}{0.023916} & \colorbox{lime!30}{0.015661} \\
B=4  & \colorbox{green!30}{0.414036} & \colorbox{blue!30}{0.141648}  & \colorbox{teal!30}{0.053533} & \colorbox{lime!30}{0.025423} & \colorbox{brown!30}{0.007171} \\
B=16 & \colorbox{blue!30}{0.239866} & \colorbox{teal!30}{0.063814} & \colorbox{lime!30}{0.026485} & \colorbox{brown!30}{0.009675} & \colorbox{cyan!30}{0.004532} \\
B=64 & \colorbox{teal!30}{0.297715} & \colorbox{lime!30}{0.052855} & \colorbox{brown!30}{0.021604} & \colorbox{cyan!30}{0.006910} & \colorbox{orange!30}{0.005641} \\
B=256& \colorbox{lime!30}{0.288948} & \colorbox{brown!30}{0.057999} & \colorbox{cyan!30}{0.018541} & \colorbox{orange!30}{0.008588} & \colorbox{magenta!30}{0.001205} \\
\hline
\end{tabular}
\label{table:random_dependent}
\caption{Multiplicative error with Random Distribution, Dependent Stream}
\end{table}

\begin{table}[H]
\centering
\begin{tabular}{l c c c c c}
\hline\hline
&A=2 & A=4 & A=8 & A=16 & A=32 \\ [0.5ex] 
\hline
B=1  & \colorbox{red!30}{0.401140}  & \colorbox{green!30}{0.292316}  & \colorbox{blue!30}{0.100423} & \colorbox{teal!30}{0.051171} & \colorbox{lime!30}{0.017768} \\
B=4  & \colorbox{green!30}{0.224761} & \colorbox{blue!30}{0.118114}  & \colorbox{teal!30}{0.019799} & \colorbox{lime!30}{0.012302} & \colorbox{brown!30}{0.011697} \\
B=16 & \colorbox{blue!30}{0.286748} & \colorbox{teal!30}{0.052774} & \colorbox{lime!30}{0.008510} & \colorbox{brown!30}{0.012821} & \colorbox{cyan!30}{0.008939} \\
B=64 & \colorbox{teal!30}{0.328810} & \colorbox{lime!30}{0.025917} & \colorbox{brown!30}{0.022485} & \colorbox{cyan!30}{0.007506} & \colorbox{orange!30}{0.002322} \\
B=256& \colorbox{lime!30}{0.295032} & \colorbox{brown!30}{0.011417} & \colorbox{cyan!30}{0.021785} & \colorbox{orange!30}{0.003699} & \colorbox{magenta!30}{0.002610} \\
\hline
\end{tabular}
\label{table:random_independent}
\caption{Multiplicative error with Random Distribution, Independent Stream}
\end{table}

\begin{table}[H]
\label{table:zipfian_dependent}
\centering
\begin{tabular}{l c c c c c}
\hline\hline
&A=2 & A=4 & A=8 & A=16 & A=32 \\ [0.5ex] 
\hline
B=1  & \colorbox{red!30}{0.561978}  & \colorbox{green!30}{0.116835}  & \colorbox{blue!30}{0.053566} & \colorbox{teal!30}{0.026087} & \colorbox{lime!30}{0.005170} \\
B=4  & \colorbox{green!30}{0.241000} & \colorbox{blue!30}{0.055140}  & \colorbox{teal!30}{0.040690} & \colorbox{lime!30}{0.014915} & \colorbox{brown!30}{0.005655} \\
B=16 & \colorbox{blue!30}{0.203792} & \colorbox{teal!30}{0.037948} & \colorbox{lime!30}{0.012536} & \colorbox{brown!30}{0.007444} & \colorbox{cyan!30}{0.003694} \\
B=64 & \colorbox{teal!30}{0.230859} & \colorbox{lime!30}{0.037576} & \colorbox{brown!30}{0.005997} & \colorbox{cyan!30}{0.002579} & \colorbox{orange!30}{0.001762} \\
B=256& \colorbox{lime!30}{0.203329} & \colorbox{brown!30}{0.028557} & \colorbox{cyan!30}{0.009255} & \colorbox{orange!30}{0.001563} & \colorbox{magenta!30}{0.002044} \\
\hline
\end{tabular}
\caption{Multiplicative error with Zipfian Distribution, Dependent Stream}
\end{table}

\begin{table}[H]
\centering
\begin{tabular}{l c c c c c}
\hline\hline
&A=2 & A=4 & A=8 & A=16 & A=32 \\ [0.5ex] 
\hline
B=1  & \colorbox{red!30}{0.567727}  & \colorbox{green!30}{0.117505}  & \colorbox{blue!30}{0.082284} & \colorbox{teal!30}{0.038022} & \colorbox{lime!30}{0.014242} \\
B=4  & \colorbox{green!30}{0.465189} & \colorbox{blue!30}{0.039366}  & \colorbox{teal!30}{0.031472} & \colorbox{lime!30}{0.023964} & \colorbox{brown!30}{0.014675} \\
B=16 & \colorbox{blue!30}{0.352457} & \colorbox{teal!30}{0.068149} & \colorbox{lime!30}{0.017812} & \colorbox{brown!30}{0.016407} & \colorbox{cyan!30}{0.011978} \\
B=64 & \colorbox{teal!30}{0.311025} & \colorbox{lime!30}{0.040865} & \colorbox{brown!30}{0.025680} & \colorbox{cyan!30}{0.018174} & \colorbox{orange!30}{0.011105} \\
B=256& \colorbox{lime!30}{0.285863} & \colorbox{brown!30}{0.029304} & \colorbox{cyan!30}{0.013811} & \colorbox{orange!30}{0.013607} & \colorbox{magenta!30}{0.010472}\\
\hline
\end{tabular}
\label{table:zipfian_independent}
\caption{Multiplicative error with Zipfian Distribution, Independent Stream}
\end{table}

We can summarize the following findings from this experiment from data in 4 tables:

\begin{enumerate}
    \item The counter matrix algorithm is effective, and achieve a small multiplicative error when $A\geq 32$.
    \item Regardless of distribution type, and whether the streams are independent or not, the multiplicative error will be lower as $A$ or $B$ increases.
    \item Given a fixed space where $A^2B$ are same (indicated by cells with same color in the tables above), increasing $A$ generally results in lower multiplicative error.
\end{enumerate}

\subsubsection{Comparison with Sketching of Sketches Algorithm}
In this experiment, we try to compare counter matrix algorithm with the sketching of sketches algorithm given the same amount of space. More specifically, we choose multiple values for $A$, and run algorithm \ref{Alg:cm} with parameter $A$, and $A^2$ copies of algorithm \ref{Alg:l2} so that both algorithms use roughly $A^2$ space, and compare the multiplicative error. For each $A$, we run 10 experiments and take the average error.

\begin{figure}[H] 
  \begin{subfigure}[b]{0.5\linewidth}
    \centering
    \includegraphics[width=0.9\linewidth]{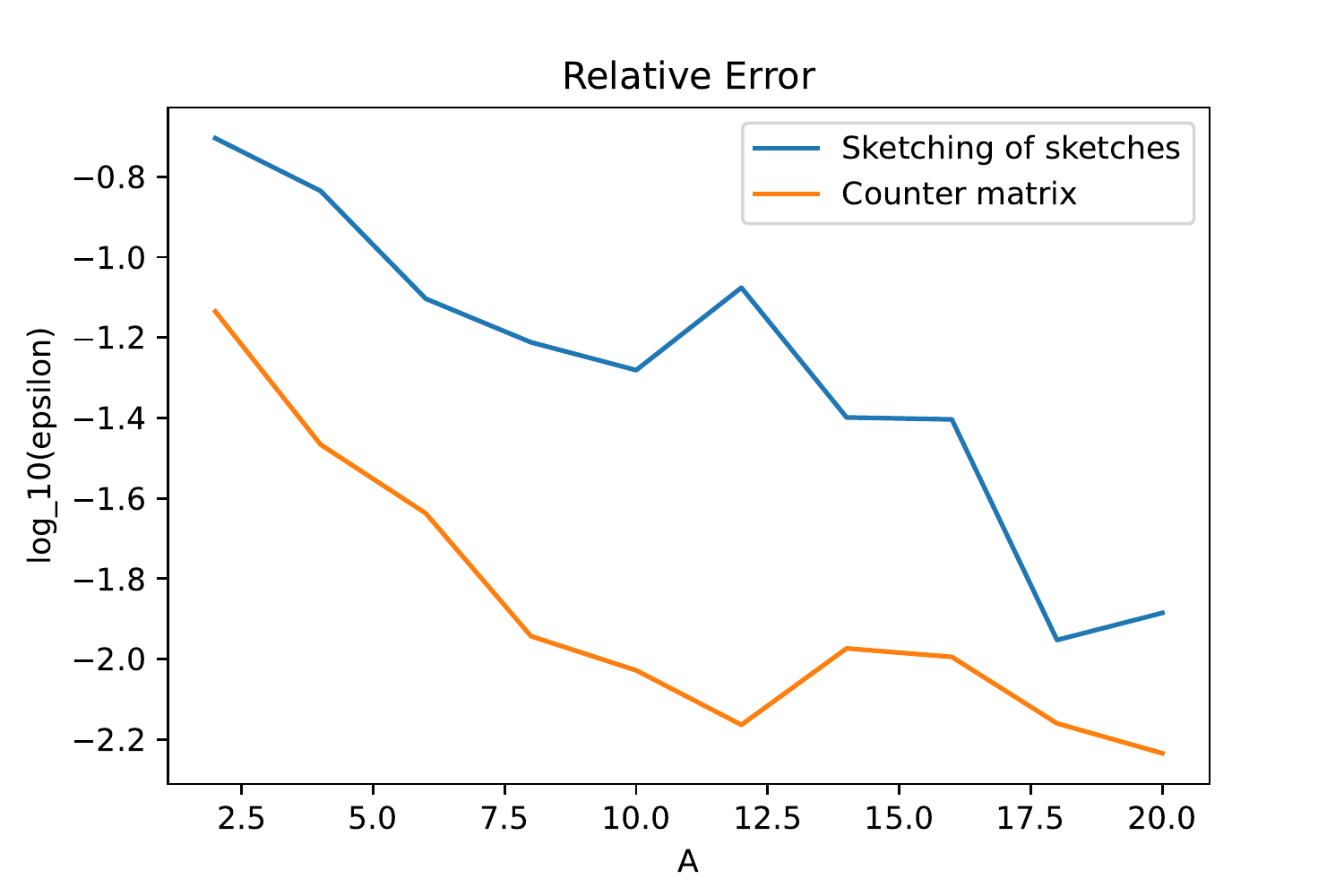} 
    \caption{Random Distribution, Dependent Stream} 
    \label{fig7:a} 
    \vspace{4ex}
  \end{subfigure}
  \begin{subfigure}[b]{0.5\linewidth}
    \centering
    \includegraphics[width=0.9\linewidth]{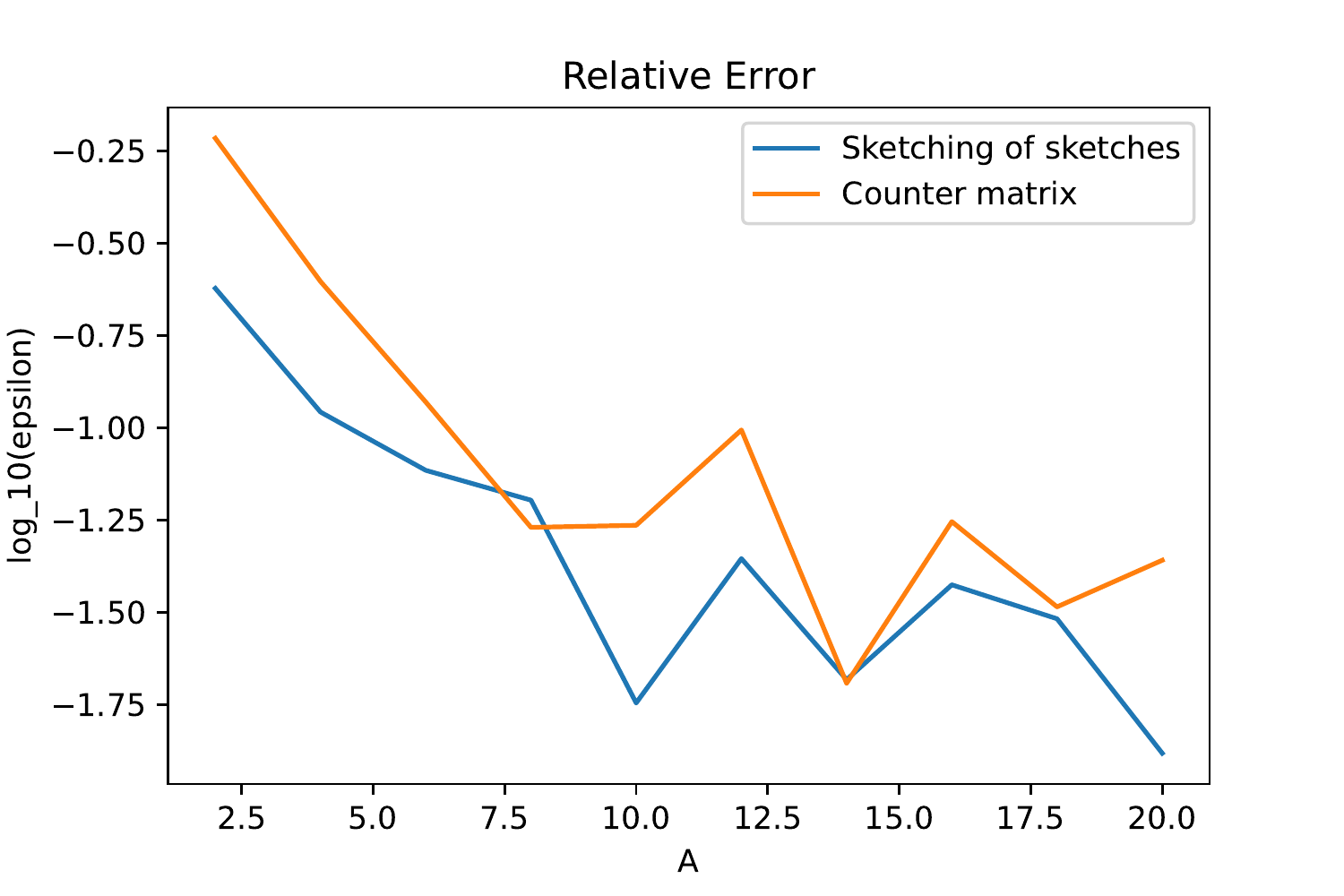} 
    \caption{Random Distribution, Inependent Stream} 
    \label{fig7:b} 
    \vspace{4ex}
  \end{subfigure} 
  \begin{subfigure}[b]{0.5\linewidth}
    \centering
    \includegraphics[width=0.9\linewidth]{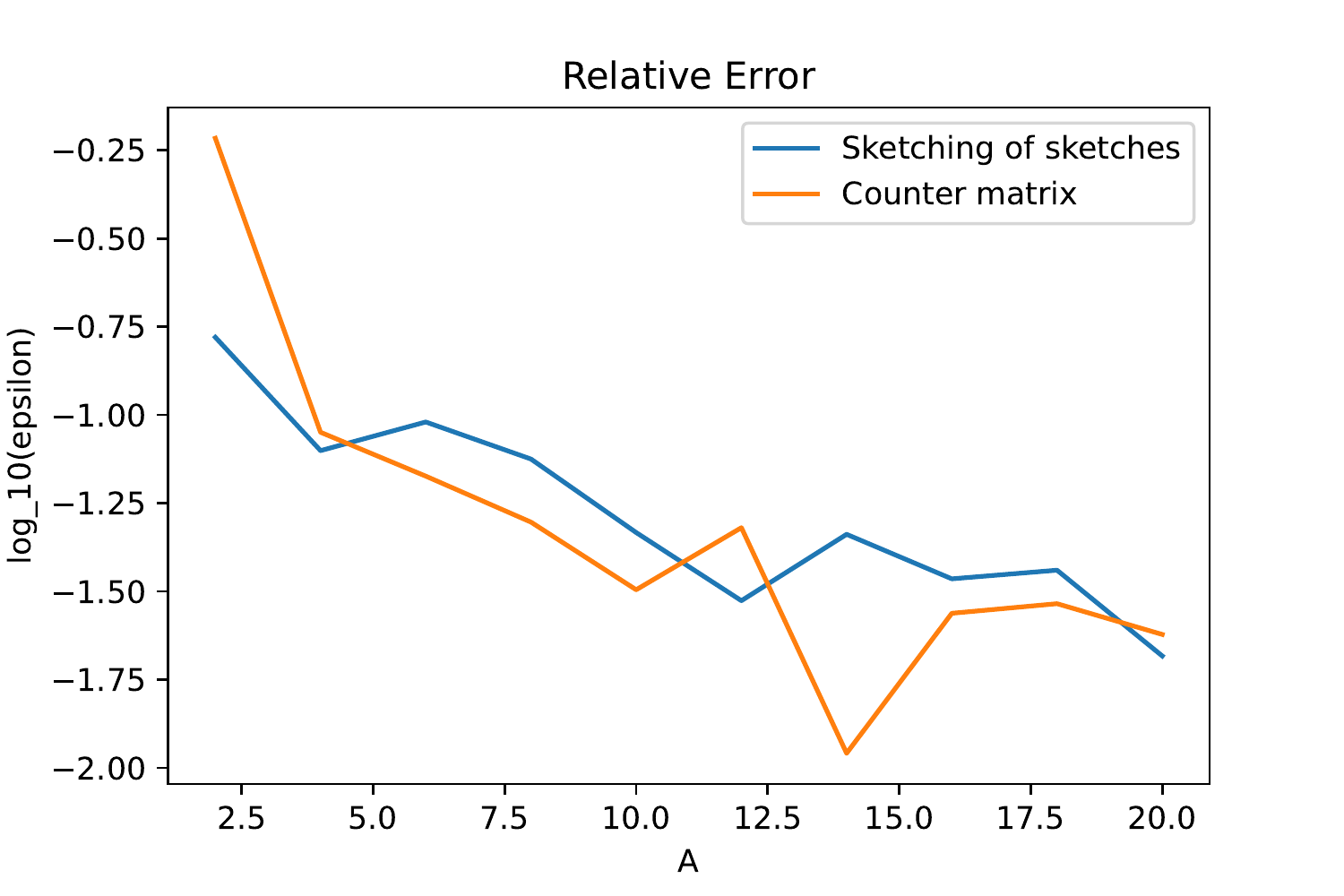} 
    \caption{Zipfian Distribution, Dependent Stream} 
    \label{fig7:c} 
  \end{subfigure}
  \begin{subfigure}[b]{0.5\linewidth}
    \centering
    \includegraphics[width=0.9\linewidth]{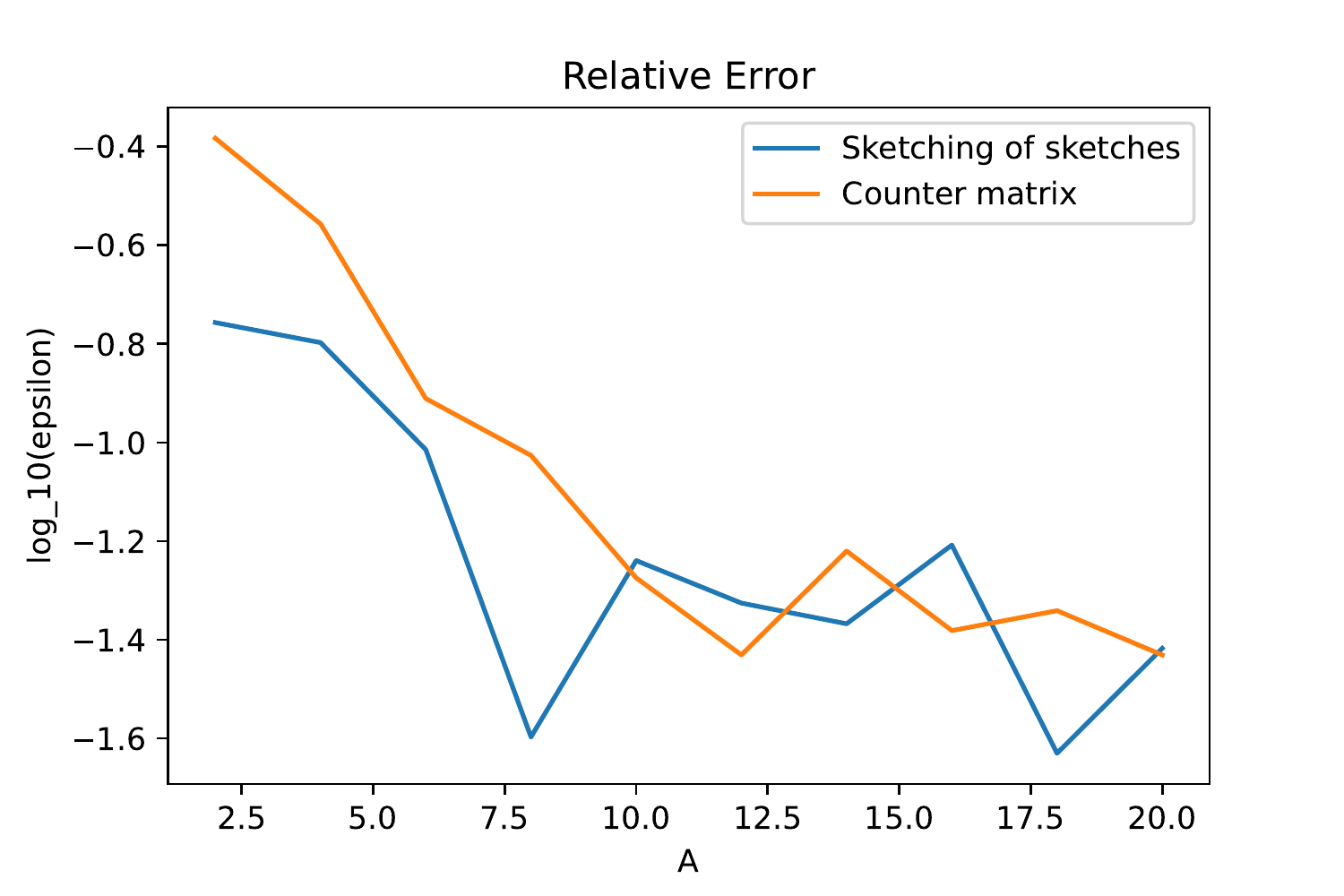} 
    \caption{Zipfian Distribution, Inependent Stream} 
    \label{fig7:d} 
  \end{subfigure} 
  \caption{Comparison of log multiplicative error of two algorithms}
  \label{fig7} 
\end{figure}
We can see that
\begin{enumerate}
    \item In both cases, increasing $A$ will generally lead to smaller multiplicative error, which is as expected.
    \item Even for small $A$ ($2\leq A\leq 20$), the multiplicative error of the counter matrix algorithm is similar or smaller than the sketching of sketches algorrithm.
\end{enumerate}
Considering the fact that the sketching of sketches algorithm actually uses an additional $(8A)^2$ space to store the hash parameters (or vectors taking $(nA)^2$ space in the original implementation in the paper), and our counter matrix algorithm only need to store $4$ hash parameters for each run, we can see that our algorithm is actually uses 16x less space, and can get a relatively good result at the same time.
\section{Conclusion}

In this mini-project, we analyzed and implemented 2 algorithms proposed in the paper by Indyk and McGregor \cite{inproceedings}:

\begin{enumerate}
    \item \textbf{For $\ell_2$ difference}: A 1-pass, $\mathcal{O}(\epsilon^{-2}\log\delta^{-1})$-space estimation that achieve $(1+\epsilon,\delta)$-approximation.
    \item \textbf{For $\ell_1$ difference}: A 1-pass, $\mathcal{O}(\log\delta^{-1})$-space estimation that achieve  $(\mathcal{O}(\log n),\delta)$-approximation.
\end{enumerate}

The implementation of the algorithm to estimate $\ell_2$ difference we achieved satisfactory results, however the algorithm to estimate $\ell_2$ difference produced error way beyond the estimated error stated in the paper. As an extension to estimate $\ell_2$ difference, we proposed a different counter matrix algorithm that aimed to achieve lower multiplicative error that the one proposed in the paper by Indyk and McGregor \cite{inproceedings}. From our analysis and experiments above, we have showed that the counter matrix algorithm is able to achieve better approximation and produce lower or similar multiplicative error than the algorithm proposed in the paper given same (or even less) amount of space.

\newpage
\section*{Appendix}
Here we try to calculate $\Var[\lVert\Tilde{\Delta}\rVert^2]$. Since we have $\lVert\Tilde{\Delta}\rVert^2=\lVert\hat{\Delta}\rVert^2+X$, and we consider $\lVert\hat{\Delta}\rVert^2$ as a constant, then $\Var[\lVert\Tilde{\Delta}\rVert^2]=\Var[X]=\E[X^2]-\E[X]^2$. We have
$$X=\sum_{\substack{h(i,j)=h(k,l)\\(i,j)\ne(k,l)}}\hat{\Delta}_{i,j}\hat{\Delta}_{k,l}=\sum_{i,j,k,l\in[n]}\delta_{i,j,k,l}\hat{\Delta}_{i,j}\hat{\Delta}_{k,l}$$
where $\delta_{i,j,k,l}=1$ if $h(i,j)=h(k,l)$ and $(i,j)\ne(k,l)$, and 0 otherwise. We have calculated that
$$\E[\delta_{i,j,k,l}]=\left\{\begin{array}{ll}
         1/A^2&\text{if }i\ne k, j\ne l  \\
         1/A&\text{if }i=k\text{ or }j=l 
    \end{array}\right.$$
Now let us consider imagine that there is another hash function $h'$ that is independent of $h$, and a random variable $X'$, where
$$X'=\sum_{i',j',k',l'\in[n]}\delta'_{i',j',k',l'}\hat{\Delta}_{i',j'}\hat{\Delta}_{k',l'}$$
where $\delta'_{i,j,k,l}=1$ if $h'(i,j)=h'(k,l)$ and $(i,j)\ne(k,l)$, and 0 otherwise. Then $X$ is independent of $X'$ and
$$\E[X]^2=\E[XX']=\E\left[\sum_{i,j,k,l\in[n]}\sum_{i',j',k',l'\in[n]}\delta_{i,j,k,l}\delta'_{i',j',k',l'}\hat{\Delta}_{i,j}\hat{\Delta}_{k,l}\hat{\Delta}_{i',j'}\hat{\Delta}_{k',l'}\right]$$
On the other hand,
$$\E[X^2]=\E\left[\sum_{i,j,k,l\in[n]}\sum_{i',j',k',l'\in[n]}\delta_{i,j,k,l}\delta_{i',j',k',l'}\hat{\Delta}_{i,j}\hat{\Delta}_{k,l}\hat{\Delta}_{i',j'}\hat{\Delta}_{k',l'}\right]$$
So we just need to find those $(i,j,k,l,i',j',k',l')$ where $\delta_{i,j,k,l}\delta_{i',j',k',l'}\ne \delta_{i,j,k,l}\delta'_{i',j',k',l'}$ to calculate the variance. There are several cases we need to discuss.
\begin{enumerate}
    \item \textbf{If $(i,j)$ and $(k,l)$ are on the same row, i.e. $i=k$, and we have $\delta_{i,j,k,l}=1$}. It implies that $h_2(j)=h_2(l)$, and therefore if $j'=j$ and $l'=l$ or $j'=l$ and $l'=j$, the probability of $h(i',j')=h(k',l')$ would be increased. 
    
    For example, if we know $h(i,j)=h(i,l)$, then the probability of $h(1,j)=h(1,l)$ will be 1 (which originally was $1/A$), and the probability of $h(1,j)=h(2,l)$ will be $1/A$ (which originally was $1/A^2$).
    
    We can then calculate the expected total difference in the variance for each pair of $(i,j)$ and $(i,l)$,
    \begin{eqnarray*}
    &&\frac{1}{A}\hat{\Delta}_{i,j}\hat{\Delta}_{i,l}\left[\left(1-\frac{1}{A}\right)\sum_{i'=1}^n (\hat{\Delta}_{i',j}\hat{\Delta}_{i',l}+\hat{\Delta}_{i',l}\hat{\Delta}_{i',j})+\left(\frac{1}{A}-\frac{1}{A^2}\right)\sum_{i'\ne k'}(\hat{\Delta}_{i',j}\hat{\Delta}_{k',l}+\hat{\Delta}_{i',l}\hat{\Delta}_{k',j})\right]\\
    &=&\frac{2}{A}\hat{\Delta}_{i,j}\hat{\Delta}_{i,l}\left[\left(1-\frac{1}{A}\right)\sum_{i'=1}^n \hat{\Delta}_{i',j}\hat{\Delta}_{i',l}+\left(\frac{1}{A}-\frac{1}{A^2}\right)\sum_{i'\ne k'}\hat{\Delta}_{i',j}\hat{\Delta}_{k',l}\right]\\
    &=&\frac{2}{A}\hat{\Delta}_{i,j}\hat{\Delta}_{i,l}\left[\left(1-\frac{1}{A}\right)^2\sum_{i'=1}^n \hat{\Delta}_{i',j}\hat{\Delta}_{i',l}+\left(\frac{1}{A}-\frac{1}{A^2}\right)\underbrace{\sum_{i',k'\in[n]}\hat{\Delta}_{i',j}\hat{\Delta}_{k',l}}_{=0}\right]\\
    &=&\frac{2}{A}\hat{\Delta}_{i,j}\hat{\Delta}_{i,l}\left[\left(1-\frac{1}{A}\right)^2\sum_{i'=1}^n \hat{\Delta}_{i',j}\hat{\Delta}_{i',l}\right]
    \end{eqnarray*}
    Summing these all up across all pairs of $(i,j),(i,l)$, we will get
    $$\frac{2}{A}\left(1-\frac{1}{A}\right)^2\underbrace{\sum_{\substack{(i,j),(k,l),(i',j'),(k',l')\\ \text{form a rectangle}}}\hat{\Delta}_{i,j}\hat{\Delta}_{k,l}\hat{\Delta}_{i',j'}\hat{\Delta}_{k',l'}}_{=:\mathcal{R}}$$
    where in $\mathcal{R}$, each rectangle of same position and shape is counted 4 times.
    \item \textbf{If $(i,j)$ and $(k,l)$ are on the same column, i.e. $j=l$, and we have $\delta_{i,j,k,l}=1$}. This case is totally the same as the previous case, which add another $2\mathcal{R}(1-1/A)^2/A$ to our variance.
    \item \textbf{If $(i,j)$ and $(k,l)$ are on different column and different row, i.e. $i\ne k$ and $j\ne l$, and  $\delta_{i,j,k,l}=1$}, then
    \begin{itemize}
        \item If $\{(i',j'),(k',l')\}=\{(i,j),(k,l)\}$, then the probability of $h(i',j')=h(k',l')$ increased from $1/A^2$ to 1. This means that for each pair of $(i,j)$ and $(k,l)$, the following is added to the variance,
        $$\frac{1}{A^2}\hat{\Delta}_{i,j}\hat{\Delta}_{k,l}\left[2\left(1-\frac{1}{A^2}\right)\hat{\Delta}_{i,j}\hat{\Delta}_{k,l}\right]=\frac{2}{A^2}\left(1-\frac{1}{A^2}\right)\hat{\Delta}^2_{i,j}\hat{\Delta}^2_{k,l}$$
        Summing them up across all pairs gives us
        $$\frac{2}{A^2}\left(1-\frac{1}{A^2}\right)\sum_{\substack{i\ne k\\j\ne l}}\hat{\Delta}^2_{i,j}\hat{\Delta}^2_{k,l}\leq \frac{2}{A^2}\left(1-\frac{1}{A^2}\right)\lVert\hat{\Delta}\rVert^2\sum_{i,j\in [n]}\hat{\Delta}^2_{i,j}\leq \frac{2}{A^2}\left(1-\frac{1}{A^2}\right)\lVert\hat{\Delta}\rVert^4$$
        \item If $\{(i',j'),(k',l')\}=\{(i,l),(k,j)\}$, then the probability of $h(i',j')=h(k',l')$ also increased from $1/A^2$ to 1. Therefore we also add the following,
        $$\frac{1}{A^2}\hat{\Delta}_{i,j}\hat{\Delta}_{k,l}\left[2\left(1-\frac{1}{A^2}\right)\hat{\Delta}_{i,l}\hat{\Delta}_{k,j}\right]$$
        summing across all pairs gives us
        $$\frac{2}{A^2}\left(1-\frac{1}{A^2}\right)\mathcal{R}$$
        \item If $\{j',l'\}=\{j,l\}$, then if $i'=k'$ the probability of matching is increased from $1/A$ to 1, otherwise the probability increased from $1/A^2$ to $1/A$. This is similar to our first case, adding the following term to our variance,
        \begin{eqnarray*}
    &&\frac{1}{A^2}\hat{\Delta}_{i,j}\hat{\Delta}_{k,l}\left[2\left(1-\frac{1}{A}\right)\sum_{i'=1}^n \hat{\Delta}_{i',j}\hat{\Delta}_{i',l}+2\left(\frac{1}{A}-\frac{1}{A^2}\right)\sum_{i'\ne k'}\hat{\Delta}_{i',j}\hat{\Delta}_{k',l}\right]\\
    &=&\frac{2}{A^2}\hat{\Delta}_{i,j}\hat{\Delta}_{k,l}\left[\left(1-\frac{1}{A}\right)^2\sum_{i'=1}^n \hat{\Delta}_{i',j}\hat{\Delta}_{i',l}\right]
    \end{eqnarray*}
    We can see that the points $(i,j),(k,l),(i',j),(i',l)$ form a right trapezoid. Summing all those trapezoids up, for each specific $i,j,l,i'$,
    
    $$\frac{2}{A^2}\left(1-\frac{1}{A}\right)^2\hat{\Delta}_{i,j}\hat{\Delta}_{i',j}\hat{\Delta}_{i',l}\underbrace{\sum_{k\ne i}\hat{\Delta}_{k,l}}_{=-\hat{\Delta}_{i,l}}$$
    Summing all of them up gives us
    $$-\frac{4}{A^2}\left(1-\frac{1}{A}\right)^2\mathcal{R}$$
    \item If $\{i',k'\}=\{i,k\}$, then it's also similar to the previous case, adding $-4\mathcal{R}(1-1/A)^2/A^2$ to the variance.
    
    \end{itemize}
\end{enumerate}
It is quite hard to compute a value for $\mathcal{R}$, but we can compute an upper bound, for a rectangle $(i,j),(k,j),(k,l),(i,l)$,
$$\hat{\Delta}_{i,j}\hat{\Delta}_{k,j}\hat{\Delta}_{k,l}\hat{\Delta}_{i,l}\leq \frac{\hat{\Delta}_{i,j}^2\hat{\Delta}_{k,l}^2+\hat{\Delta}_{k,j}^2\hat{\Delta}_{i,l}^2}{2}$$
Therefore, an upper bound for the sum of all terms containing $\hat{\Delta}_{i,j}^2$ is going to sum up to $2\hat{\Delta}_{i,j}^2\sum_{k,l\in[n]}\hat{\Delta}_{k,l}^2\leq 2\hat{\Delta}_{i,j}^2\lVert \hat{\Delta}\rVert^2$, which, summing over all $i,j$ gives $2\lVert \hat{\Delta}\rVert^4$.

This completes the all parts for the variance and we can calculate that
\begin{eqnarray*}
\Var[X]&\leq&\left[\frac{4}{A}\left(1-\frac{1}{A}\right)^2+\frac{2}{A^2}\left(1-\frac{1}{A^2}\right)-\frac{8}{A^2}\left(1-\frac{1}{A}\right)^2\right]\mathcal{R}+\frac{2}{A^2}\left(1-\frac{1}{A^2}\right)\lVert\hat{\Delta}\rVert^4\\
&\leq&2\left[\frac{4}{A}\left(1-\frac{1}{A}\right)^2+\frac{2}{A^2}\left(1-\frac{1}{A^2}\right)-\frac{8}{A^2}\left(1-\frac{1}{A}\right)^2\right]\lVert\hat{\Delta}\rVert^4+\frac{2}{A^2}\left(1-\frac{1}{A^2}\right)\lVert\hat{\Delta}\rVert^4\\
&=&\left(\frac{8}{A}-\frac{26}{A^2}+\frac{40}{A^3}-\frac{22}{A^4}\right)\lVert\hat{\Delta}\rVert^4\\
&\leq&\frac{8}{A}\lVert\hat{\Delta}\rVert^4 \qquad (\text{when }A\geq 2)
\end{eqnarray*}
In particular, when $A=1$, we can see that $\Var[X]=0$, which is as expected since the output $\ell_2$ difference will always be 0. Therefore, we can conclude that
$$\Var[\lVert\Tilde{\Delta}\rVert^2]=\Var[\lVert\hat{\Delta}\rVert^2+X]=\Var[X]\leq\frac{8}{A}\lVert\hat{\Delta}\rVert^4$$
\newpage
\bibliography{reference}{}
\bibliographystyle{plain}
\end{document}